\newcommand{\pst}{\mbox{pstd}}
\newcommand{\lst}[1]{\mathcal{L}(#1)} 
\newcommand{\tl}[1]{\mathcal{C}(#1)}   
\newcommand{\lss}[2]{\mbox{lss}(#1,#2)}
\theoremstyle{plain}
\newtheorem{theorem}{Theorem}
\newtheorem{corollary}[theorem]{Corollary}
\newtheorem{proposition}[theorem]{Proposition}
\newtheorem{lemma}[theorem]{Lemma}
\theoremstyle{definition}
\newtheorem{definition}[theorem]{Definition}
\newtheorem{remark}[theorem]{Remark}
\newtheorem{example}[theorem]{Example}
\title{Some variations on Lyndon words}
\author[1]{Francesco Dolce}
\author[2]{Antonio Restivo}
\author[3]{Christophe Reutenauer}
\affil[1]{\small{
		IRIF, Universit\'e Paris Diderot (France), \\
		\url{dolce@irif.fr}
}}
\affil[2]{\small{
		Dipartimento di Matematica e Informatica, Universit\`a degli Studi di Palermo (Italy), \\ \url{antonio.restivo@unipa.it}
}}
\affil[3]{\small{
		LaCIM, Universit\'e du Qu\'ebec \`a Montr\'eal (Qu\'ebec, Canada), \\ \url{reutenauer.christophe@uqam.ca}
}}
\begin{document}

\maketitle

\begin{abstract}
In this paper we compare two finite words $u$ and $v$ by the lexicographical order of the infinite words $u^\omega$ and $v^\omega$.
Informally, we say that we compare $u$ and $v$ by the infinite order.
We show several properties of Lyndon words expressed using this infinite order.
The innovative aspect of this approach is that it allows to take into account also non trivial conditions on the prefixes of a word, instead that only on the suffixes.
In particular, we derive a result of Ufnarovskij [V. Ufnarovskij, \emph{Combinatorial and asymptotic methods in algebra}, 1995] that characterizes a Lyndon word as a word which is greater, with respect to the infinite order, than all its prefixes.
Motivated by this result, we introduce the prefix standard permutation of a Lyndon word and the corresponding (left) Cartesian tree.
We prove that the left Cartesian tree is equal to the left Lyndon tree, defined by the left standard factorization of Viennot [G. Viennot, \emph{Algèbres de Lie libres et monoïdes libres}, 1978].
This result is dual with respect to a theorem of Hohlweg and Reutenauer [C. Hohlweg and C. Reutenauer, \emph{Lyndon words, permutations and trees}, 2003].
\end{abstract}

\section{Introduction}
\label{sec:intro}

Let $A$ be a totally ordered alphabet.
A word $w$ is called a \emph{Lyndon word} if for each nontrivial factorization $w=uv$, one has $w < v$ (here $<$ is the lexicographical order).
Lyndon words were introduced in~\cite{Lyndon}.

A well-known theorem of Lyndon states that every finite word $w$ can be decomposed in a unique way as a nonincreasing product $w = \ell_1 \ell_2 \cdots \ell_n$ of Lyndon words.
This theorem, which is a combinatorial counterpart of the famous theorem of Poincaré-Birkhoff-Witt, provides an example of a factorization of the free monoid (see~\cite{L2}).
It has also many algorithmic applications and it may be computed in an efficient way. 
Indeed, Duval proposed in~\cite{D} a linear-time algorithms to compute it, while Apostolico and Crochemore proposed in~\cite{AC} a $O(\lg n)$-time parallel algorithm.

The (right) \emph{Lyndon tree} of a Lyndon word $w$ corresponds recursively to the following \emph{right standard factorization} of $w$, when no reduced to a single letter: $w$ can be written as $w=uv$ where $v$ is the longest proper nonempty suffix of $w$ which is a Lyndon word.
The word $u$ is then also a Lyndon word.
Remark that one can also define a \emph{left standard factorization} of a Lyndon word, and then a left Lyndon tree (cf.~\cite{V} and~\cite{BLRS}).

On the other hand, one can associate to a Lyndon word $w$ the \emph{Cartesian tree} corresponding to its suffix standard permutation (also commonly known as the inverse suffix array of $w$).
Hohlweg and Reutenauer have proved that the (right) Lyndon tree of a Lyndon word is equal to its Cartesian tree (see~\cite{HR}).
This connection is useful for the computation of runs in a word (see, e.g.~\cite{CR,CIKRRW,BIINTT,runs}).

In this paper we consider a new approach that uses infinite words: the relation between two finite words $u$ and $v$ is determined by the lexicographical order of the infinite words $u^\omega$ and $v^\omega$ (where $u^\omega = uuu\cdots$).
Informally, we say that we \emph{compare $u$ and $v$ using the infinite order}.

Note that one can have $u < v$ but $u^\omega > v^\omega$.
For instance, if $a < b$, one has $ab < aba$ but $(aba)^\omega < (ab)^\omega$.

This new relation between words has been used in some important results in combinatorics on words, as, for instance in the bijection of Gessel and Reutenauer (cf.~\cite{GR}), which is at the basis of some extentions of the Burrows-Wheeler transform (see~\cite{MRRS} and~\cite{K}).

We show that several properties of Lyndon words can be expressed by using the infinite order.
We prove (Corollary\ref{cor:suffix}) that a word $w$ is a Lyndon word if and only if $w^\omega < v^\omega$ for each proper suffix $v$ of $w$, i.e., $w$ is smaller, with respect to the infinite order, than all its proper suffixes.
Moreover, we show that in the classical factorization theorem by Lyndon (every word can be factorized in a unique way as a non-increasing product of Lyndon words) the product is non-increasing also with respect to the infinite order.
We also deduce (Theorems~\ref{theo:factorization2} and~\ref{theo:shortest}) new characterizations of the first and of the last factor of the factorization in Lyndon words.
 
The innovative aspect of this new approach in the study of Lyndon words is that it takes into account also conditions on the prefixes of a word, instead that only on the suffixes.
In particular, we derive (Corollary\ref{cor:U}) a result of Ufnarovskij (cf.~\cite{U}) that characterizes a Lyndon word as a word which is greater, with respect to the infinite order, than all its proper prefixes.
 
In the last section, motivated by the Ufnarovskij’s Theorem, we show that the ordering of the prefixes of a word according to the infinite order is non-trivial, and this leads to the notion \emph{prefix array} of a word. 
We then introduce the \emph{prefix standard permutation} of a word (the inverse of the prefix array) and the corresponding left Cartesian tree.
We prove, as a result which is dual with respect to that of Hohlweg and Reutenauer, that the left Cartesian tree of a Lyndon words is equal to its left Lyndon tree (Theorem~\ref{theo:equal}).
 
Some of the results of Sections~\ref{sec:infinite} and~\ref{sec:Lyndon} can be extended by considering a \emph{generalized order} relation, i.e., an order in which the comparison between two words depends on the length of their common prefix.
A word $w$ is called a \emph{generalized Lyndon word} if $w^\omega < v^\omega$ (with respect the generalized order) for each proper suffix $v$ of $w$.
Generalized Lyndon words have been introduced in~\cite{R2} and  
their theory has been further developed in~\cite{DRR18}.
A very special case of generalized order is given by the \emph{alternating order}, in which the comparison between two words depends on the \emph{parity} of the length of  their common prefix.
The generalized Lyndon words with respect to the alternating order are called \emph{Galois words}.
A bijection, similar to that of Gessel and Reutenauer, for the alternating order, has been proved in~\cite{GRR}: it leads to the definition of the \emph{Alternate Burrows-Wheeler Transform} (ABWT), that has been also studied in~\cite{GMRRS}.

\section{Definitions and notations}
\label{sec:def}

For undefined notation we refer to~\cite{L} and~\cite{L2}.
We denote by $A$ a finite alphabet, by $A^*$ the free monoid and by $A^+$ the free semigroup.
Elements of $A^*$ are called {\em words} and the identity element, denoted by $1$ is called the {\em empty word}.
We say that a word $u$ is a {\em factor} of the word $w$ if $w=xuy$ for some words $x,y$; $u$ is a {\em prefix} (resp. {\em suffix}) if $x=1$ (resp. $y=1$); it is {\em nontrivial} if $u \neq 1$ and {\em proper} if $u \neq w$. 
We say that $w=ps$ is a {\em nontrivial factorization} of $w$ whenever $p,s$ are both nonempty.
The length of a word $w = a_1 a_2 \cdots a_n$, where $a_i \in A$ for all $i$, is equal to $n$ and it is denoted by $|w|$.

A {\em period} of a word $a_1 a_2 \cdots a_n$ is a natural integer $p$ such that $a_i = a_{i+p}$ for any $i$ such that $i, i+p \in \{ 1, \ldots, n \}$; it is called {\em a nontrivial period} if  $0<p<n$.
A word having a nontrivial period is called {\em periodic}. 

We say that {\em $v$ is a fractional power of $u$} if $u = u_1 u_2$ and $v = u^k u_1$ for some nonnegative integer $k$.
In this case, one writes $v = u^r$, where $r = k + |u_1|/|u|$ is a positive rational.
Note that for $k=0$ (or $r<1$) this means that $v$ is a prefix of $u$.
Fractional powers are also known as \emph{sesquipowers} (see, e.g.,~\cite{PR}).

We say that the $v$ is {\em a strict fractional power of $u$} if $v$ is a fractional power of $u$ and, with the notations above, $k\geq 1$ or, equivalently, that $r\geq 1$.
In this case $u$ is a prefix of $v$.

\begin{example}
\label{ex:fractional}
Let $u = abcdef$.
Then $u^{2/3}=abcd$ and $u^{5/3}=abcdefabcd$.
The last one is, in particular, a strict fractional power of $u$.
\end{example}

We denote by $A^\omega$ the set of sequences over $A$, also called {\em infinite words}; such a sequence $(a_n)_{n \ge 1}$ is also written $a_1 a_2 a_3 \cdots$.
If $w$ is a (finite) word of length $n\geq 1$, $w^\omega$ denotes the infinite word having $w$ as a prefix and of period $n$.

We denote by  $A^\infty = A^* \cup A^\omega$ the set of finite and infinite words.

A {\em border} of a word $w$ of length $n$ is a word which is simultaneously a nontrivial proper prefix and suffix of $w$.
A word is called {\em unbordered} if it has no border.
It is well-known that a word has a border if and only if it is periodic.

\section{Infinite order on finite words}
\label{sec:infinite}

Given an order $<$ on the alphabet $A$, we can define the {\em lexicographical order} $<_{lex}$ (or simply $<$ when it is clear from the context) on $A^\infty$ in the following way : $u <_{lex} v$ if either $u$ is a proper prefix of $v$ (in which case $u$ must be in $A^*$) or we may write $u = pau'$, $v = pbv'$ for some words $p \in A^*$, $u',v' \in A^\infty$ and some letters $a,b \in A$ such that $a < b$.

\begin{definition}
\label{def:comparison}
Let $s, t$ be two distinct elements of $A^\omega$ such that we have a factorization $s = u_1 \cdots u_k s_0$ with $u_1, \cdots, u_k$ finite nonempty words and $s_0$ is an infinite word.
We say that {\em the comparison between $s$ and $t$ takes place within $u_k$} if $u_1 \cdots u_{k-1}$ is a prefix of $t$, but $u_1\cdots u_k$ is not.
If moreover $u_1, u_2, \ldots, u_k$ are letters we say that the comparison takes place {\em at position} $k$.
\end{definition}

Note that, when the comparison takes place within $u_k$, one may write $t = u_1 \cdots u_{k-1} u_k' t'$, for some $t' \in A^\omega$ and $u_k' \neq u_k$ such that $|u'_k| = |u_k|$.

Suppose that $u,v$ are finite nonempty words.
The following fact is well-known: one has $u^\omega = v^\omega$ if and only if $u,v$ are power of a common word, and this is true if and only if $u$ and $v$ commute (see, for instance,~\cite[Corollary 6.2.5]{L2}).

In the sequel we compare two finite words $u$ and $v$ by comparing the infinite words $u^\omega$ and $v^\omega$, with respect to the lexicographical order.
Informally, we say that we \emph{compare $u$ and $v$ with respect to the infinite order}.

Note that, given two nonempty finite words $u,v$, it is not true that $u^\omega < v^\omega \Leftrightarrow u < v$, as shown in the following example.

\begin{example}
Let us consider the two words $u=b$, $v=ba$.
Since $u$ is a prefix of $v$ we have $u < v$.
Nevertheless $u^\omega = bbb\cdots > bababa\cdots = v^\omega$.
\end{example}

In the next section we will show that this equivalence holds for Lyndon words (Theorem~\ref{theo:uvL}).

Even though we use the term ''infinite order'', when $u^\omega \neq v^\omega$ the comparison between $u^\omega$ and $v^\omega$ takes place at a position bounded by a function of the lengths of $u$ and $v$, as stated by the following lemma, which is a reformulation of the well known Fine and Wilf theorem (see, for instance,~\cite{L}).

\begin{lemma}
\label{lem:finewilf}
Let $u, v$ be nonempty words such that $u^\omega \neq v^\omega$.
Then the comparison between $u^\omega$ and $v^\omega$ takes place at a position $k \le |u| + |v| - \gcd(|u|,|v|)$.
\end{lemma}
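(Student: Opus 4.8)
The plan is to deduce the statement from the Fine and Wilf theorem by a contrapositive argument. Suppose, for contradiction, that the comparison between $u^\omega$ and $v^\omega$ does \emph{not} take place at any position $k \le |u| + |v| - \gcd(|u|,|v|)$; equivalently, $u^\omega$ and $v^\omega$ agree on their first $N := |u| + |v| - \gcd(|u|,|v|)$ letters. I would first record the two obvious periodicities: the infinite word $u^\omega$ has period $|u|$ and the infinite word $v^\omega$ has period $|v|$, hence the common prefix $w$ of length $N$ of $u^\omega = v^\omega$ (on the first $N$ letters) has \emph{both} $|u|$ and $|v|$ as periods. Since $N = |u| + |v| - \gcd(|u|,|v|)$, the Fine and Wilf theorem applies and tells us that $w$ also has period $d := \gcd(|u|,|v|)$.

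Next I would promote this to a statement about the infinite words. Because $|u|$ is a multiple of $d$ and the prefix of $u^\omega$ of length $|u| \le N$ has period $d$, the word $u$ itself is a power of its prefix $p$ of length $d$; similarly $v$ is a power of the \emph{same} word $p$ (here one uses that both length-$d$ prefixes coincide, as they are both prefixes of the common word $w$). Consequently $u$ and $v$ are powers of a common word, which by the well-known fact recalled just before the lemma (see~\cite[Corollary 6.2.5]{L2}) is equivalent to $u^\omega = v^\omega$. This contradicts the hypothesis $u^\omega \neq v^\omega$, and the lemma follows.

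The only genuinely delicate point is checking that the Fine and Wilf bound is applied to a word that is actually long enough: one needs $N = |u|+|v|-\gcd(|u|,|v|) \ge \max(|u|,|v|)$ so that the length-$|u|$ and length-$|v|$ prefixes both sit inside $w$ and the argument ``$u$ is a power of $p$'' makes sense; this inequality holds since $\min(|u|,|v|) \ge \gcd(|u|,|v|)$. I also want to be careful that ``the comparison takes place at position $k$'' in the sense of Definition~\ref{def:comparison} is literally the statement ``$k$ is the first position where $u^\omega$ and $v^\omega$ differ'', so that ``agree on the first $N$ letters'' is exactly the negation of ``the comparison takes place at some position $\le N$''; this is immediate from the definition since $u_1,\dots,u_k$ are taken to be letters. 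Everything else is routine.
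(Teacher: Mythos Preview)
Your argument is correct and is precisely the intended one: the paper does not actually write out a proof of this lemma but simply introduces it as ``a reformulation of the well known Fine and Wilf theorem,'' and what you have supplied is exactly that reformulation spelled out in detail. There is nothing to add or correct.
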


The following example shows that the bound given in the previous lemma is tight.

\begin{example}
Let $u = abaab$ and $v = abaababa$.
Then,
$$
u^\omega  =  abaababaabab \ldots
\quad \mbox{and} \quad
v^\omega  =  abaababaabaa \ldots.
$$
One has that $v^\omega < u^\omega$ and that the comparison takes place at position $12 = |u| + |v| - 1$.
\end{example}

\begin{lemma}
\label{lem:fractional}
Let $u,v$ be nonempty words such that $u^\omega \neq v^\omega$.
Then the comparison between $u^\omega$ and $v^\omega$ takes place within the first factor $v$ of $v^\omega$ if and only if $v$ is not a fractional power of $u$.
\end{lemma}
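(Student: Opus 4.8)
The plan is to isolate the elementary fact that, for nonempty words $u,v$, the word $v$ is a fractional power of $u$ if and only if $v$ is a prefix of the infinite word $u^\omega$; once this is known, the lemma follows by directly unwinding Definition~\ref{def:comparison} applied to the factorization $v^\omega = v\cdot v^\omega$ (so that in the notation there $u_1=v$, $k=1$, $s_0=v^\omega$, and $u_1\cdots u_{k-1}$ is the empty word, which is trivially a prefix of $u^\omega$). Under this reading, the comparison between $v^\omega$ and $u^\omega$ takes place within the first factor $v$ exactly when $v$ is \emph{not} a prefix of $u^\omega$.

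For the auxiliary fact I would argue as follows. If $v$ is a fractional power of $u$, write $u=u_1u_2$ and $v=u^ku_1$ with $k\ge 0$ an integer; then $v$ is a prefix of $u^ku=u^{k+1}$, hence of $u^\omega$. Conversely, if $v$ is a prefix of $u^\omega$, write $|v|=q|u|+r$ with $0\le r<|u|$; comparing the first $|v|$ letters of $v$ and of $u^\omega$ shows that $v=u^qu_1$, where $u_1$ is the prefix of $u$ of length $r$, so $v$ is the fractional power $u^{q+r/|u|}$ of $u$.

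It then remains to observe that, since $u^\omega\neq v^\omega$, there is a smallest integer $k\ge1$ for which $v^k$ fails to be a prefix of $u^\omega$: otherwise every power $v^n$ would be a prefix of $u^\omega$, forcing $u^\omega=v^\omega$, a contradiction. By Definition~\ref{def:comparison} the comparison between $v^\omega$ and $u^\omega$ takes place within the $k$-th factor $v$ of $v^\omega$; in particular it takes place within the first factor if and only if $k=1$, that is, if and only if $v$ is not a prefix of $u^\omega$, which by the auxiliary fact is equivalent to $v$ not being a fractional power of $u$. I do not expect any real obstacle; the only points to keep in mind are that the hypothesis $u^\omega\neq v^\omega$ is exactly what guarantees that the comparison takes place at all, so that the alternative ``within the first factor $v$ / within a later copy of $v$'' is genuinely exhaustive, and that in the auxiliary fact one must carefully match the integer exponent ($k$, resp. $q$) with the prefix $u_1$ of $u$ occurring in the definition of a fractional power.
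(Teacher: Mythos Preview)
Your proposal is correct and follows essentially the same route as the paper: both arguments hinge on the auxiliary equivalence ``$v$ is a fractional power of $u$ if and only if $v$ is a prefix of $u^\omega$'', and then read off from Definition~\ref{def:comparison} that the comparison takes place within the first copy of $v$ precisely when $v$ is not a prefix of $u^\omega$. The paper's proof is terser---it simply asserts the auxiliary equivalence and notes that the comparison takes place within the first $v$ if and only if the length-$|v|$ prefixes of $u^\omega$ and $v^\omega$ differ---whereas you supply a full proof of the auxiliary fact and an extra paragraph locating the minimal $k$; this additional material is sound but not needed for the lemma as stated.
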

\begin{proof}
The comparison between the two infinite words takes place within the first $v$ if and only if the two prefixes of length $|v|$ of $u^\omega$ and $v^\omega$ are different.
The conclusion follows from the fact that $v$ is a fractional power of $u$ if and only if $v$ is a prefix of $u^\omega$.
\end{proof}

\begin{lemma}
\label{lem:comparison2}
Let $s,t \in A^\omega$ be as in Definition \ref{def:comparison} (and the sentence following it).
Then $s < t$ (resp. $s > t$) implies $u_1 \cdots u_k' s' < u_1 \cdots u_k t'$ (resp. $u_1 \cdots u_k' s' > u_1\cdots u_kt'$) for any infinite words $s',t'$.
\end{lemma}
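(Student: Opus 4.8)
The plan is to unfold the definition of the lexicographic order on $A^\omega$, locate exactly the position at which $s$ and $t$ first disagree, and observe that this position lies \emph{strictly inside} the block $u_k$ (equivalently $u_k'$); since that position does not involve the infinite tails of $s$ and $t$ at all, the sign of the comparison is unchanged when those tails are replaced by arbitrary infinite words.

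Concretely, using Definition~\ref{def:comparison} and the sentence following it, I would fix the notation $s = u_1 \cdots u_{k-1} u_k s_0$ and $t = u_1 \cdots u_{k-1} u_k' t'$ with $u_k \neq u_k'$ and $|u_k| = |u_k'|$, where $u_1 \cdots u_{k-1}$ is a prefix of both $s$ and $t$ while $u_1 \cdots u_k$ is a prefix of $s$ but not of $t$. Since $s$ and $t$ are distinct infinite words, $s < t$ is realized by a genuine letter disagreement: writing $p$ for the longest common prefix of $s$ and $t$, one has $s = p\,a\,\sigma$ and $t = p\,b\,\tau$ with $a,b \in A$, $a < b$. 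The crucial step is the double bound $|u_1 \cdots u_{k-1}| \le |p| \le |u_1 \cdots u_k| - 1$: the left inequality holds because $u_1 \cdots u_{k-1}$ is a common prefix, and the right one because $u_1 \cdots u_k$ is a prefix of $s$ but not of $t$ (if $|p| \ge |u_1 \cdots u_k|$, then $u_1 \cdots u_k$, being a prefix of $s$ no longer than $p$, would be a prefix of $p$, hence of $t$). Hence $p = u_1 \cdots u_{k-1} q$ for a proper prefix $q$ of $u_k$, and since $|q| < |u_k| = |u_k'|$ the word $q$ is also a prefix of $u_k'$; reading off position $|q|+1$ in the relevant block then gives $a = u_k[\,|q|+1\,]$ and $b = u_k'[\,|q|+1\,]$. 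Thus the order between $s$ and $t$ is decided purely by comparing the $(|q|+1)$-st letters of $u_k$ and of $u_k'$, which involves neither $s_0$ nor $t'$.

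To conclude, for arbitrary infinite words $s',t'$ I would apply the very same analysis to $u_1 \cdots u_{k-1} u_k s'$ and $u_1 \cdots u_{k-1} u_k' t'$: these share $u_1 \cdots u_{k-1} q$ as a common prefix and carry, at position $|u_1 \cdots u_{k-1} q| + 1$, the letters $u_k[\,|q|+1\,] = a$ and $u_k'[\,|q|+1\,] = b$ respectively. Since $a < b$, the comparison comes out with the same sign as that of $s$ versus $t$, which is the assertion; the ``resp.'' case follows identically by interchanging the roles of $a$ and $b$. I do not anticipate a real obstacle here: the only point that needs care is the double inequality confining $|p|$ strictly between $|u_1 \cdots u_{k-1}|$ and $|u_1 \cdots u_k|$ --- this is exactly where the hypothesis ``the comparison between $s$ and $t$ takes place within $u_k$'' is used, rather than the weaker ``$s < t$'', and it is precisely what makes the decisive position immune to the choice of tails.
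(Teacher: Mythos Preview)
Your argument is correct and is precisely the elementary unwinding of the definition that the authors have in mind; the paper in fact states this lemma without proof, so there is nothing substantive to compare against. The key observation---that the hypothesis ``the comparison takes place within $u_k$'' pins the first disagreement to a position strictly inside the $k$-th block, hence independent of the tails---is exactly what is needed, and you isolate it cleanly via the double bound $|u_1\cdots u_{k-1}|\le |p|<|u_1\cdots u_k|$.

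One remark worth recording: in your concluding paragraph you compare $u_1\cdots u_{k-1}u_k s'$ with $u_1\cdots u_{k-1}u_k' t'$, whereas the lemma \emph{as printed} has $u_k$ and $u_k'$ interchanged. Your version is the one that is actually true and the one that is invoked in the proof of Lemma~\ref{lem:p<s}; with the conventions of Definition~\ref{def:comparison} (so that $u_k$ sits in $s$ and $u_k'$ in $t$), swapping the two blocks as in the printed display reverses the inequality. So you have silently corrected a typographical slip rather than introduced an error.
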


\begin{lemma}
\label{lem:p<s}
Let $u,v$ be nonempty finite words such that $u^\omega < v^\omega$ and let $x,y$ be two finite words.
Then
\begin{enumerate}
 \item[\rm (i)] if neither $u$ or $v$ is a prefix of the other, then $(ux)^\omega < (vy)^\omega$;
 \item[\rm (ii)] if $v$ is not a fractional power of $u$, then $(u^{k+1} x)^\omega < (v y)^\omega$, where $k$ is the largest integer such that $u^k$ is a prefix of $v$.
 In particular $u^\omega < (vy)^\omega$.
\end{enumerate}
\end{lemma}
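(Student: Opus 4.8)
The plan is to treat both parts by the same device: locate the position at which the two infinite words being compared first disagree, observe that this position lies inside the initial segment common to $u^\omega$ and $(ux)^\omega$ (and to $v^\omega$ and $(vy)^\omega$, resp.\ to $v$ and the relevant prefix), and then transport the hypothesis $u^\omega<v^\omega$ to the desired conclusion via Lemma~\ref{lem:comparison2}, which permits replacing the tails of two infinite words once their comparison has been settled.

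For (i) I would set $p$ to be the longest common prefix of $u$ and $v$. Since neither word is a prefix of the other, $p$ is a proper prefix of both, so $u=pau_1$ and $v=pbv_1$ with letters $a\neq b$. The comparison of $u^\omega$ and $v^\omega$ then takes place at position $|p|+1$, so $u^\omega<v^\omega$ forces $a<b$; since $pa$ is a prefix of $(ux)^\omega$ and $pb$ of $(vy)^\omega$, Lemma~\ref{lem:comparison2} gives $(ux)^\omega<(vy)^\omega$ at once.

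For (ii), since $v$ is not a fractional power of $u$ it is not a prefix of $u^\omega$ (the equivalence used in the proof of Lemma~\ref{lem:fractional}); in particular $v\neq u^k$, so $v=u^kv'$ with $v'$ nonempty. I would then establish three ``not a prefix'' facts: $v'$ is not a prefix of $u^\omega$ (since $v=u^kv'$ is not a prefix of $u^\omega=u^ku^\omega$); $u$ is not a prefix of $v'$ (else $u^{k+1}$ would be a prefix of $v$, contradicting maximality of $k$); and $v'$ is not a prefix of $u$ (else $v'$, hence $v$, would be a prefix of $u^\omega$). These force the longest common prefix $q$ of $u$ and $v'$ to be a proper prefix of each, so $u=qdu_1$ and $v'=qcv_1$ with letters $d\neq c$, and crucially $|q|<|u|$ and $|q|<|v'|$. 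Reading letters: $u^\omega$ has prefix $u^kqd$ while $v^\omega=u^kv'vv\cdots$ has prefix $u^kqc$, so $u^\omega<v^\omega$ forces $d<c$; and the same prefixes $u^kqd$, $u^kqc$ are prefixes of $(u^{k+1}x)^\omega$ and of $(vy)^\omega$ (this is exactly where $|q|<|u|$ and $|q|<|v'|$ enter), so Lemma~\ref{lem:comparison2} yields $(u^{k+1}x)^\omega<(vy)^\omega$. Taking $x$ empty (or $x=u^{k+1}$, since $(u^{k+1}u^{k+1})^\omega=u^\omega$) gives $u^\omega<(vy)^\omega$.

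The only delicate point is the bookkeeping in (ii): the three non-prefix statements are precisely what guarantees $|q|<|u|$ and $|q|<|v'|$, hence that the decisive letters $d$ and $c$ occur within a single copy of $u$ and of $v'$; this is what keeps the comparison position inside $u^{k+1}$ and inside $v$, so that the argument never escapes the first copies. Everything else is a routine reading of the lexicographic order, streamlined by Lemma~\ref{lem:comparison2}.
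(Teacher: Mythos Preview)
Your proof is correct and follows essentially the same approach as the paper: locate the position where $u^\omega$ and $v^\omega$ first differ, verify that this position lies within the first copy of $u^{k+1}$ (respectively, of $v$), and transport the inequality via Lemma~\ref{lem:comparison2}. The paper is terser---it simply writes $u=u'au_1$, $v=u^ku'bv_1$ with $a\neq b$ and checks that $u^\omega$, $(u^{k+1}x)^\omega$ share the prefix $u^ku'a$ while $v^\omega$, $(vy)^\omega$ share $u^ku'b$---whereas you spell out the three non-prefix facts that justify this decomposition; but the argument is the same.
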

\begin{proof} 
In case (i), the comparison between the two infinite words takes place within the prefix of length $\min(|u|,|v|)$.
Hence we conclude using Lemma~\ref{lem:comparison2}.

Suppose now that the hypothesis of (ii) holds.
Then we can write $u = u'au_1$ and $v = u^ku'bv_1$, with $u' \in A^*$, $a,b\in A$ such that $a\neq b$, and $u_1, v_1 \in A^*$.
Let $m = |u^k u'|$.
Since $u^\omega = u^k u' a u_1 u^\omega$ and since $u^\omega < v^\omega$, we have that $a <_{m+1} b$.
The two infinite words $u^\omega$ and $(u^{k+1}x)^\omega$ share the same prefix of length $m+1$, and the same do $v^\omega$ and $(vy)^\omega$.
Thus the comparison between between $(u^{k+1}x)^\omega$ and $(vy)^\omega$ takes place at position $m+1$.
Since $a <_{m+1} b$, we can conclude.
\end{proof}

We use several times the following observation: the opposite order $\tilde<$ of an order $<$ is also a lexicographical order.

\begin{example}
\label{ex:lexinverse}
Let $<$ be the usual lexicographical order on $\{a, b\}$, that is such that $a < b$.
Then $\tilde<$ is defined by $b~\tilde<~a$.
\end{example}

\begin{theorem}
\label{theo:usvt}
Let $u,v$ be finite nonempty words such that $u^\omega \neq v^\omega$, and $s,t$ be infinite words in $\{u,v\}^\omega$.
Then 
$$
u^\omega < v^\omega \quad \Longleftrightarrow \quad us < vt.
$$
\end{theorem}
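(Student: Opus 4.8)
The plan is to prove the two implications separately, exploiting the symmetry between $u$ and $v$ together with Lemma~\ref{lem:comparison2}, and to reduce the general case $s,t \in \{u,v\}^\omega$ to statements about the finite words $u,v$ by looking at where the comparison takes place. I would first dispose of the case in which neither $u$ nor $v$ is a prefix of the other: then the comparison between $u^\omega$ and $v^\omega$ occurs within the common prefix of length $\min(|u|,|v|)$, and since $us$ and $vt$ begin with $u$ and $v$ respectively, the comparison between $us$ and $vt$ takes place at exactly the same position, so $u^\omega < v^\omega \Leftrightarrow us < vt$ is immediate from Lemma~\ref{lem:comparison2} (applied in both directions).

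The substantive case is when one of $u,v$ is a prefix of the other, say (after possibly using the symmetry) $u$ is a prefix of $v$; write $v = u^k u'$ where $u'$ is a proper prefix of $u$ and $k \ge 1$ (here I use that $u^\omega \neq v^\omega$, which forbids $v$ from being a genuine power of $u$). Since $v$ is a strict fractional power of $u$, the comparison between $u^\omega$ and $v^\omega$ does \emph{not} take place within the first $v$ of $v^\omega$ by Lemma~\ref{lem:fractional} read with the roles of $u$ and $v$; rather it takes place at a position $m+1 > |v|$, where $v^\omega$ first disagrees with $u^\omega$. The key observation is: in $us$, after the prefix $v = u^k u'$ we continue with the tail of $u$ followed by $s$, i.e.\ $us$ agrees with $u^\omega$ at least through position $|v|$ and then its behaviour is governed by $s \in \{u,v\}^\omega$; similarly $vt$ agrees with $v^\omega$ through position $|v|$ and then continues with $t$. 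I would then argue that the comparison between $us$ and $vt$ still takes place at a position bounded in terms of $|u|,|v|$, and that at that position the relative order is forced by $u^\omega$ versus $v^\omega$ — because $s$ and $t$, being in $\{u,v\}^\omega$, locally look like either $u^\omega$ or $v^\omega$, both of which are themselves $<$ or $>$ each other consistently by the inductive/Fine–Wilf structure.

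More concretely, I expect to set up the following: let $n \ge |v|$ be minimal such that $us$ and $vt$ disagree at position $n+1$; I claim $us < vt$ iff $u^\omega < v^\omega$. For positions up to $|v|$ both words agree (both equal the length-$|v|$ prefix of $u^\omega$, since $v$ is a prefix of $u^\omega$). For the tail, I would compare $us$ with $u^\omega$ and $vt$ with $v^\omega$: since $s$ is a concatenation of copies of $u$ and $v$, and $u,v$ are prefixes of $u^\omega$ up to relevant lengths, one shows that $us$ and $u^\omega$ agree far enough, and likewise $vt$ and $v^\omega$, that the first disagreement between $us$ and $vt$ coincides with the first disagreement between $u^\omega$ and $v^\omega$ (this is where Lemma~\ref{lem:finewilf} controls the relevant window and prevents the replacement of $u^\omega,v^\omega$ by $us,vt$ from changing anything before position $m+1$). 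Then Lemma~\ref{lem:comparison2} finishes the job.

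The main obstacle will be the bookkeeping in that last paragraph: one must check carefully that prepending a $u$ (resp.\ $v$) and then appending an \emph{arbitrary} word from $\{u,v\}^\omega$ does not create a new, earlier point of disagreement between $us$ and $vt$ than the one already present between $u^\omega$ and $v^\omega$. The clean way to handle this is to invoke Lemma~\ref{lem:p<s}(ii): with $v = u^k u'$ a strict fractional power of $u$, part (ii) gives $(u^{k+1}x)^\omega < (vy)^\omega$ (or its reverse, using that $\tilde<$ is again a lexicographic order), and one observes that any $us$ with $s \in \{u,v\}^\omega$ has the same length-$(m+1)$ prefix as some $(u^{k+1}x)^\omega$, while any $vt$ has the same length-$(m+1)$ prefix as some $(vy)^\omega$ — so the strict inequality transfers. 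Thus the real content is already packaged in Lemmas~\ref{lem:fractional}, \ref{lem:comparison2} and \ref{lem:p<s}, and the proof is mostly a matter of assembling them and treating the symmetric case $v \mid u$ by swapping roles.
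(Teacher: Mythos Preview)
Your overall plan --- split into cases according to the fractional-power relationship between $u$ and $v$, exploit $s,t\in\{u,v\}^\omega$ to control enough of the prefixes of $us$ and $vt$, and use the opposite order $\tilde<$ for the symmetric case and the converse --- is exactly the paper's strategy. But two concrete things go wrong in your execution.

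First, your case decomposition is faulty. From ``$u$ is a prefix of $v$'' you write $v=u^{k}u'$ with $u'$ a proper prefix of $u$ and $k\ge1$; that is, you assume $v$ is a (strict) fractional power of $u$. This does not follow: take $u=ab$, $v=abba$. The paper's trichotomy is the right one: (A) $v$ is not a fractional power of $u$; (B) $v$ is a strict fractional power of $u$; (C) $v$ is a proper prefix of $u$. Your ``neither is a prefix of the other'' case is only a subcase of (A); the case ``$u$ is a prefix of $v$ but $v$ is not a fractional power of $u$'' still lies in (A) and is not covered by what you wrote.

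Second, and relatedly, you invoke Lemma~\ref{lem:p<s}(ii) under the hypothesis that $v$ \emph{is} a strict fractional power of $u$, but that lemma is stated for the opposite hypothesis (``$v$ is not a fractional power of $u$''). So the appeal is vacuous precisely in the case you need it. Lemma~\ref{lem:p<s}(ii) is what handles case (A); it does not touch case (B).

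The substance you are missing is the direct computation in case (B). Writing $u=u_1u_2$ and $v=u^{k}u_1$ with $k\ge1$ and $u_1,u_2$ nonempty, one gets $u^\omega=u^{k}u_1u_2u_1\cdots$ and $v^\omega=u^{k}u_1u_1u_2\cdots$, so $u^\omega<v^\omega$ forces $u_2u_1<u_1u_2$. Now use $s,t\in\{u,v\}^\omega$: since $v$ begins with $u$, any such $s$ makes $us$ begin with $u^{k+1}u_1=u^{k}u_1u_2u_1$ (check the two possibilities $s=u\cdots$ and $s=v\cdots$), and any such $t$ makes $vt=u^{k}u_1\cdot t$ begin with $u^{k}u_1u_1u_2$; hence $us<vt$. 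This is the missing step that Lemma~\ref{lem:p<s} cannot supply. Case (C) and the converse implication are then obtained, as you say, by passing to the opposite order.
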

\begin{proof}
Without loss of generality, we may assume that $u,v$ are primitive. 

Suppose that $u^\omega < v^\omega$.
Let us first consider the case when $v$ is not a fractional power of $u$.
Then we can write $u = u_1 au_2$, $v = u^k u_1 bv_2$ with letters $a\neq b$ and $k\geq 0$.
Then $u^\omega = u^k u_1a \cdots$ and $v^\omega = u^k u_1 b\cdots$.
Since $u^\omega<v^\omega$, we must have $a<b$.
Note that by hypothesis, one has either $us = u^\omega$ or $us = u^l v\cdots$, with $l\geq 1$.
In both cases $us$ begins by $u^{k+1}$.
Thus $us = u^k u_1 a \cdots$.
Moreover $vt$ begins by $v$ and hence $vt = u^k u_1 b \cdots$.
Therefore $us < vt$.

Suppose now that $v$ is a strict fractional power of $u$.
We can write $v = u^k u_1$, $u = u_1 u_2$, for some $k \geq 1$, and $u_1,u_2$ nonempty finite words (since $u^\omega \neq v^\omega$).
In particular, we have $u_1 u_2 \neq u_2 u_1$ since $u$ is primitive.
Then $u^\omega = u^k u_1 u_2 u_1 \cdots$ and $v^\omega = u^k u_1 u_1 u_2 \cdots$ (since $k \geq 1$).
From the inequality $u^\omega<v^\omega$, we deduce that $u_2 u_1 < u_1 u_2$.
We claim that $us=u^ku_1u_2u_1\cdots$: indeed, either $s=u^lv\cdots$, $1\leq l\leq \infty$, so that $u^{k+2}$ is a prefix of $us$, implying the claim; or $s=v\cdots$, so that $us$ begins by $uv=uu^ku_1=u^ku_1u_2u_1$ and the claim is true, too.
Moreover $vt = u^k u_1 u_1 u_2 \cdots$.
Therefore $us < vt$.

It remains the case where $v$ is a proper prefix of $u$. Then $u$ is either not a fractional power of $v$, or $u$ is a strict fractional power of $v$.
In this case, we have $v^\omega~\widetilde<~u^\omega$. Hence the previous arguments imply that $vt~\widetilde<~us$ and therefore $us < vt$.

Suppose now that $u^\omega < v^\omega$ does not hold. Since $u^\omega \neq v^\omega$, we have $u^\omega~\widetilde<~v^\omega$.
From the previous arguments it follows that $us~\widetilde<~vt$, hence $vt < us$ and therefore $us < vt$ does not hold.
\end{proof}

Using the previous theorem we can prove the following result (part of the it is stated, in a more general context, in~\cite{R2}, see also~\cite{DRR18}).

\begin{corollary}
	\label{cor:uuvv}
	The following conditions are equivalent for nonempty words $u,v \in A^*$:
	\begin{enumerate}
		\item[\rm (1)] $u^\omega < v^\omega$;
		\item[\rm (2)] $(uv)^\omega < v^\omega$;
		\item[\rm (3)] $u^\omega < (vu)^\omega$;
		\item[\rm (4)] $(uv)^\omega < (vu)^\omega$;
		\item[\rm (5)] $u^\omega < (uv)^\omega$;
		\item[\rm (6)] $(vu)^\omega < v^\omega$.
	\end{enumerate}
\end{corollary}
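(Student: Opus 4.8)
The plan is to use Theorem~\ref{theo:usvt} as a black box, since all six conditions involve infinite powers of words built from $u$ and $v$. The key observation is that $(uv)^\omega$, $(vu)^\omega$, $u^\omega$ and $v^\omega$ are all infinite words lying in $\{u,v\}^\omega$ (possibly after noting $u^\omega = u\cdot u^\omega$ with $u^\omega \in \{u,v\}^\omega$, and similarly $(uv)^\omega = u\cdot (vu)^\omega$ etc.), so Theorem~\ref{theo:usvt} lets us strip a common leading factor $u$ or $v$ from both sides of an inequality while preserving it. First I would dispose of the degenerate case $u^\omega = v^\omega$: then $u$ and $v$ are powers of a common word, hence commute, so all six infinite words coincide and every strict inequality in (1)--(6) is simultaneously false; thus the equivalence holds trivially. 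So from now on assume $u^\omega \neq v^\omega$, which is exactly the hypothesis needed to invoke Theorem~\ref{theo:usvt}.

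Next I would record the two ``stripping'' facts that drive everything. Writing $s = v u^\omega \in \{u,v\}^\omega$ and $t = u^\omega \in \{u,v\}^\omega$, Theorem~\ref{theo:usvt} gives $u^\omega < v^\omega \iff u s < v t$, i.e. $u^\omega < v^\omega \iff u v u^\omega < u u^\omega = u^\omega$; but $uvu^\omega = (uv)^\omega$ iff $u$ and $v$ commute, which we have excluded, so one must be slightly careful — it is cleaner to apply Theorem~\ref{theo:usvt} directly with the two sides already in the form $u\cdot(\text{something in }\{u,v\}^\omega)$ versus $v\cdot(\text{something in }\{u,v\}^\omega)$. Concretely: (1) $\Leftrightarrow$ (5) because $u^\omega = u\cdot u^\omega$ and $(uv)^\omega = u\cdot(vu)^\omega$, both tails in $\{u,v\}^\omega$, and Theorem~\ref{theo:usvt} (applied with the leading $u$) reduces $u^\omega < (uv)^\omega$ to $u^\omega < (vu)^\omega$... — here I realize the right bookkeeping is to prove the cycle (1)$\Rightarrow$(2)$\Rightarrow\cdots\Rightarrow$(6)$\Rightarrow$(1), or better a small collection of pairwise equivalences, each obtained by writing both sides with a common first letter-block $u$ or $v$ and invoking Theorem~\ref{theo:usvt}.

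So the concrete steps I would carry out are: (a) (1)$\iff$(2): both $u^\omega$ and $v^\omega$ equal $v\cdot v^\omega$ resp. ... no — instead, $(uv)^\omega = u\cdot(vu)^\omega$ and I compare with $v^\omega = v\cdot v^\omega$; these do not share a first block, so instead I note $(uv)^\omega < v^\omega \iff (uv)^\omega < v^\omega$ and rewrite $v^\omega = v\cdot v^\omega$, $(uv)^\omega = u\cdot(vu)^\omega = uv\cdot(uv)^\omega$, hmm. The clean route: apply Theorem~\ref{theo:usvt} with the roles ``$u$'' $:= uv$, ``$v$'' $:= v$, $s := (uv)^\omega$-tail in $\{uv,v\}^\omega$, etc. — but $\{uv,v\}^\omega \subseteq \{u,v\}^\omega$, and $(uv)^\omega \neq v^\omega$ follows from $u^\omega\neq v^\omega$ via Corollary hypotheses. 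Then Theorem~\ref{theo:usvt} gives $(uv)^\omega < v^\omega \iff (uv)\cdot(uv)^\omega < v\cdot(uv)^\omega$, i.e. $(uv)^\omega < v^\omega \iff (uv)(uv)^\omega < v(uv)^\omega$; dividing by the common prefix... this is circular. The honest statement is: Theorem~\ref{theo:usvt} says for $p^\omega \neq q^\omega$ and $s,t \in \{p,q\}^\omega$, $p^\omega < q^\omega \iff ps < qt$. Take $p = uv$, $q = v$: then $ps$ and $qt$ with $s = (vu)^\omega \in \{uv,v\}^\omega$ wait $vu \notin\{uv,v\}$.

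**The main obstacle.** The real content, and what I expect to be the one genuinely fiddly point, is matching each of (1)--(6) to an application of Theorem~\ref{theo:usvt} with the correct choice of the two "letters" $p,q$ (each a word in $\{u,v\}^*$) and the correct tails $s,t \in \{p,q\}^\omega$, and checking the side condition $p^\omega \neq q^\omega$ in each case (all of which follow from $u^\omega \neq v^\omega$ together with Corollary~\ref{cor:uuvv}'s own hypotheses, or can be checked directly from primitivity/commutation). For instance, (1)$\iff$(4): take $p = uv$, $q = vu$; then $p^\omega = uv u^\omega\!\cdot$, and one checks $u^\omega < v^\omega \iff uvu\cdots$ — use Lemma~\ref{lem:comparison2} or Theorem~\ref{theo:usvt} with $p=u,q=v$, $s = vu^\omega$ giving tail comparisons. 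A cleaner organizing principle: prove (1)$\iff$(2), (1)$\iff$(3), (1)$\iff$(4), (1)$\iff$(5), (1)$\iff$(6) each as a single invocation of Theorem~\ref{theo:usvt} — e.g. for (1)$\iff$(5), Theorem~\ref{theo:usvt} with the words $u$ and $v$ and tails $s = u^\omega$, $t = u u^\omega$ yields $u^\omega < v^\omega \iff u\cdot u^\omega < v\cdot u u^\omega$, i.e. $u^\omega < v u u^\omega$; since $vuu^\omega \ne (vu)^\omega$ in general this still is not quite (3), so one more stripping step is needed. I would therefore, after setting up the degenerate case, simply march through the five equivalences (1)$\Leftrightarrow$(k) for $k=2,\dots,6$, in each case writing out both infinite words as $w_1 w_2 \cdots$ with $w_i \in \{u,v\}$, identifying the first index where a letter-block would have to differ, and appealing to Theorem~\ref{theo:usvt} (together with Lemma~\ref{lem:comparison2} to propagate the strict inequality past the agreed prefix); the side conditions $u^\omega \ne v^\omega$, $(uv)^\omega \ne v^\omega$, etc. are each immediate. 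This is routine but must be done carefully case by case — that careful casework is the whole proof.
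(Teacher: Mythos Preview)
Your overall plan is the paper's: the equivalences (1)$\Leftrightarrow$(2), (1)$\Leftrightarrow$(3), (1)$\Leftrightarrow$(4) are each a single application of Theorem~\ref{theo:usvt}, because $(uv)^\omega = u\cdot(vu)^\omega$, $(vu)^\omega = v\cdot(uv)^\omega$, $u^\omega = u\cdot u^\omega$, $v^\omega = v\cdot v^\omega$, and all four tails lie in $\{u,v\}^\omega$. You essentially see this; the degenerate case $u^\omega = v^\omega$ is also handled correctly.

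Your confusion is entirely about (5) and (6), and it stems from reaching for the wrong tool. In (5) both sides begin with the \emph{same} block $u$, and in (6) both begin with $v$; Theorem~\ref{theo:usvt} compares words beginning with \emph{different} blocks, so it does not apply here, and your attempts to force it (changing the alphabet to $\{uv,v\}$, etc.) run into exactly the obstacles you describe. The paper's move is much more elementary: lexicographic order on $A^\omega$ is left-cancellative, i.e. $ws < wt \Leftrightarrow s < t$ for any finite word $w$. Hence
\[
u^\omega < (uv)^\omega \;\Longleftrightarrow\; u\cdot u^\omega < u\cdot (vu)^\omega \;\Longleftrightarrow\; u^\omega < (vu)^\omega,
\]
which is (5)$\Leftrightarrow$(3), and symmetrically (6)$\Leftrightarrow$(2) by cancelling the common prefix $v$. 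You actually wrote down this very reduction for (5) and then talked yourself out of it because you tried to justify it via Theorem~\ref{theo:usvt}; the justification is the trivial cancellation above, not the theorem. With that one observation the ``fiddly casework'' you anticipate disappears entirely.
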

\begin{proof}
	It follows from Theorem~\ref{theo:usvt} that Condition (1) is equivalent to each of the Conditions (2), (3) and (4).

	Condition (5) is equivalent to condition (3): indeed $ u^\omega < (vu)^\omega \Leftrightarrow uu^\omega < u(vu)^\omega \Leftrightarrow u^\omega < (uv)^\omega.$
	Similarly, condition (6) is equivalent to condition (2): indeed, $ (uv)^\omega < v^\omega \Leftrightarrow  v(uv)^\omega < vv^\omega \Leftrightarrow (vu)^\omega < v^\omega.$
\end{proof}

Note that previous corollary implies a result proved by Bergman in~\cite[ Lemma 5.1]{B} (see also~\cite[p.34 and pp.101--102]{U}).

\begin{corollary}
Let $u,v$ be two finite words such that $u^\omega < v^\omega$.
Then
\begin{equation}
    u^\omega < (uv)^\omega < (vu)^\omega < v^\omega.
\end{equation}
\end{corollary}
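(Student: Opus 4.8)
The plan is to derive the chain of inequalities directly from Corollary~\ref{cor:uuvv}, which already packages exactly the comparisons needed. Starting from the hypothesis $u^\omega < v^\omega$, I would first invoke equivalence (1)$\Leftrightarrow$(5) to get $u^\omega < (uv)^\omega$, and (1)$\Leftrightarrow$(6) to get $(vu)^\omega < v^\omega$. These are the outer two links of the displayed chain. It remains to establish the middle link $(uv)^\omega < (vu)^\omega$, which is precisely equivalence (1)$\Leftrightarrow$(4) applied to the hypothesis. Concatenating the three gives
\[
u^\omega < (uv)^\omega < (vu)^\omega < v^\omega,
\]
which is the claim.

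Concretely, I would write: by Corollary~\ref{cor:uuvv}, the hypothesis $u^\omega < v^\omega$ (condition (1)) is equivalent to each of conditions (4), (5) and (6); hence $u^\omega < (uv)^\omega$, $(uv)^\omega < (vu)^\omega$ and $(vu)^\omega < v^\omega$ all hold simultaneously, and transitivity of the lexicographic order on $A^\omega$ yields the chain. No case analysis on whether one of $u,v$ is a prefix of the other, or on fractional powers, is needed at this level, since all of that was already absorbed into the proof of Corollary~\ref{cor:uuvv} (ultimately into Theorem~\ref{theo:usvt}).

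There is essentially no obstacle here: the only thing to be careful about is that Corollary~\ref{cor:uuvv} is stated as an equivalence among the six conditions, so one must simply read off that (1) implies (4), (5), and (6) — the reverse implications are not used. If one preferred a more self-contained argument avoiding the full strength of Corollary~\ref{cor:uuvv}, one could instead apply Theorem~\ref{theo:usvt} three times directly: with $s = v u^\omega$, $t = u^\omega$ we get $u^\omega < (uv)^\omega$ from $uv\cdot u^\omega$ versus $u\cdot u^\omega$ — but this is just reproving the relevant cases of the corollary, so citing the corollary is cleaner. The main (very minor) point of care is that if $u^\omega = (uv)^\omega$ or $(uv)^\omega = (vu)^\omega$ etc., the strict inequalities could degenerate; but $u^\omega < v^\omega$ already forbids $u^\omega = v^\omega$, and the equivalences in Corollary~\ref{cor:uuvv} are between strict inequalities, so each intermediate comparison is genuinely strict.
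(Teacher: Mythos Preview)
Your proposal is correct and matches the paper's approach exactly: the paper presents this corollary as an immediate consequence of Corollary~\ref{cor:uuvv} without further argument, and your derivation via the equivalences (1)$\Leftrightarrow$(4), (1)$\Leftrightarrow$(5), (1)$\Leftrightarrow$(6) is precisely what is intended.
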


Another immediate consequence of Corollary~\ref{cor:uuvv} is that one can define the "infinite order" without any explicit use of infinite words, as stated by the following corollary.

\begin{corollary}
\label{cor:uvvu}
    Let $u,v$ be two finite words.
    Then $u^\omega < v^\omega$ if and only if $uv < vu$.
\end{corollary}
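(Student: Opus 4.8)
The plan is to reduce the statement to the equivalence of Conditions~(1) and~(4) in Corollary~\ref{cor:uuvv}, together with the elementary remark that for two finite words of the \emph{same length} the lexicographic comparison of the words coincides with the lexicographic comparison of their infinite powers.

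First I would dispose of the commuting case. If $uv = vu$, then by the well-known fact recalled before Lemma~\ref{lem:finewilf} (see~\cite[Corollary 6.2.5]{L2}) $u$ and $v$ are powers of a common word, hence $u^\omega = v^\omega$; in that situation neither $u^\omega < v^\omega$ nor $uv < vu$ holds, so the equivalence is (vacuously) true. From now on I may therefore assume $uv \neq vu$, which by the same fact is equivalent to $u^\omega \neq v^\omega$.

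Next, by the equivalence of Conditions~(1) and~(4) in Corollary~\ref{cor:uuvv}, we have $u^\omega < v^\omega$ if and only if $(uv)^\omega < (vu)^\omega$. It then remains to check that $(uv)^\omega < (vu)^\omega$ holds if and only if $uv < vu$ as finite words. I would argue this directly: the words $uv$ and $vu$ both have length $n = |u| + |v|$ and are distinct, so they first disagree at some position $i$ with $1 \le i \le n$; since $uv$ is a prefix of $(uv)^\omega$ and $vu$ is a prefix of $(vu)^\omega$, the two infinite words agree on positions $1, \ldots, i-1$ and carry at position $i$ the same pair of distinct letters as $uv$ and $vu$ do. Hence the two lexicographic comparisons have the same outcome, and chaining the two equivalences gives the claim. (Alternatively, one could invoke Theorem~\ref{theo:usvt} with $s = v^\omega$, $t = u^\omega$, comparing $uv^\omega$ with $vu^\omega$, but the route through Corollary~\ref{cor:uuvv} is shorter.)

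There is no serious obstacle here; the only points demanding a touch of care are the treatment of the commuting case and the observation that, precisely because $uv$ and $vu$ have equal length, their first disagreement occurs at a position $\le n$, so that passing to the infinite powers does not alter the comparison.
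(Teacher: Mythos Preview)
Your proof is correct and follows essentially the same route the paper intends: the paper presents Corollary~\ref{cor:uvvu} as an immediate consequence of Corollary~\ref{cor:uuvv}, and what you have written is precisely the natural way to unpack that---use the equivalence (1)$\Leftrightarrow$(4), handle the commuting case separately, and note that $uv$ and $vu$ have equal length so the finite and infinite comparisons agree.
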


\section{Lyndon words}
\label{sec:Lyndon}

In this section we show some fundamental properties of Lyndon words by using the infinite order, instead of the classical lexicographical order.
Moreover, the infinite order allows to introduce an innovative point of view about Lyndon words, by taking into account conditions on the prefixes of a word (instead that only on the suffixes).
In particular, we derive a result by Ufnarovskij (Corollary~\ref{cor:U}) that characterizes a Lyndon word as a word which is greater (with respect to the infinite order) than its proper prefixes.

Let us start with the classical definition of Lyndon words in terms of the usual lexicographical order (see, for instance,~\cite{D}).

\begin{definition}
\label{def:Lyndon}
A word $w$ is a Lyndon word if any one of the following three equivalent conditions holds:
\begin{enumerate}[(i)]
    \item for any nontrivial factorization $w=uv$, $u<v$,
    \item for any nontrivial factorization $w=uv$, $uv < v$,
    \item for any nontrivial factorization $w=uv$, $uv < vu$.
\end{enumerate}
\end{definition}

The following theorem provides a characterization of Lyndon words by using the infinite order.

\begin{theorem}
\label{theo:Lyndonuv}
A word $w$ is a Lyndon word if and only if, for any nontrivial factorization $w = uv$, any one of the six equivalent conditions of Corollary~\ref{cor:uuvv} holds.
\end{theorem}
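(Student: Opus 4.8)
The plan is to reduce the statement to Definition~\ref{def:Lyndon}(iii) together with Corollary~\ref{cor:uuvv}. By Corollary~\ref{cor:uuvv}, all six conditions listed there are equivalent for any fixed pair of nonempty words $u,v$; in particular condition (4), namely $(uv)^\omega < (vu)^\omega$, holds for a given factorization $w=uv$ if and only if any (equivalently, all) of the six conditions hold. So it suffices to prove: $w$ is a Lyndon word if and only if, for every nontrivial factorization $w=uv$, one has $(uv)^\omega < (vu)^\omega$.

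For the forward direction, suppose $w$ is Lyndon and let $w=uv$ be a nontrivial factorization. By Definition~\ref{def:Lyndon}(iii) we have $uv < vu$ (lexicographically, with $uv=w$). I would then invoke Corollary~\ref{cor:uvvu}: since $uv < vu$, we get $u^\omega < v^\omega$, which is condition (1) of Corollary~\ref{cor:uuvv}, hence all six conditions hold for this factorization. (Alternatively, one notes directly that $uv<vu$ with $|uv|=|vu|$ forces the comparison of $(uv)^\omega$ and $(vu)^\omega$ to take place within the first block, giving $(uv)^\omega<(vu)^\omega$ via Lemma~\ref{lem:comparison2}; this also shows $u^\omega\ne v^\omega$, so Corollary~\ref{cor:uuvv} applies.)

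For the converse, assume that for every nontrivial factorization $w=uv$ some (hence every) condition of Corollary~\ref{cor:uuvv} holds; in particular $(uv)^\omega < (vu)^\omega$, i.e.\ $w^\omega < (vu)^\omega$. Here $uv=w$ and $vu$ is a conjugate of $w$ of the same length, so $w^\omega<(vu)^\omega$ with equal block lengths forces $w<vu$ lexicographically (if $w$ were a prefix of $vu$ then, having the same length, $w=vu$, contradicting $w^\omega\ne(vu)^\omega$). Thus $uv<vu$ for every nontrivial factorization, which is exactly Definition~\ref{def:Lyndon}(iii), so $w$ is Lyndon.

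The only mild subtlety — and the place to be careful — is the passage between the ``$\omega$-order'' statement and the plain lexicographic statement for words of equal length: one must make sure $u^\omega\ne v^\omega$ (equivalently $(uv)^\omega\ne(vu)^\omega$) so that Corollary~\ref{cor:uuvv} is applicable, and one must use that for two distinct words of the same length the lexicographic comparison and the $\omega$-comparison agree. Both points are immediate from the definitions and Lemma~\ref{lem:comparison2}, so no real obstacle arises; the theorem is essentially a repackaging of Corollary~\ref{cor:uuvv} and Corollary~\ref{cor:uvvu} against condition (iii) of the classical definition.
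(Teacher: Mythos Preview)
Your proof is correct and follows essentially the same route as the paper: both reduce the statement to Definition~\ref{def:Lyndon}(iii) via Corollary~\ref{cor:uvvu}, using that $uv<vu$ is equivalent to $u^\omega<v^\omega$ (condition~(1) of Corollary~\ref{cor:uuvv}). The paper compresses this into a single sentence, while you spell out both directions and the equal-length comparison; the extra care about $u^\omega\neq v^\omega$ is harmless but unnecessary, since when $u^\omega=v^\omega$ all six strict inequalities and $uv<vu$ fail simultaneously.
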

\begin{proof}
From Corollary~\ref{cor:uvvu} it follows that condition (iii) of Definition~\ref{def:Lyndon} is equivalent to condition (1) of Corollary~\ref{cor:uuvv}.
\end{proof}

In next corollary (part of whose has already been proved, in a more general context, in~\cite[Proposition 2.1]{R2}, see also~\cite{DRR18}) we highlight conditions (1) and (2) of Corollary~\ref{cor:uuvv}.
Even though such conditions show some formal similarity with the conditions (i) and (ii) of Definition~\ref{def:Lyndon}, they are essentially different. 

\begin{corollary}
\label{cor:suffix}
A word $w$ is a Lyndon word if and only if one of the following condition is satisfied for any nontrivial factorization $w=uv$:
\begin{enumerate}[1.]
 \item $u^\omega < v^\omega$.
 \item $w^\omega < v^\omega$.
\end{enumerate}
\end{corollary}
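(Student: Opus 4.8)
The plan is to reduce everything to the equivalences already packaged in Corollary~\ref{cor:uuvv} and the reformulation of Lyndon words in Theorem~\ref{theo:Lyndonuv}. The key point is that Theorem~\ref{theo:Lyndonuv} already tells us that $w$ is a Lyndon word if and only if, for every nontrivial factorization $w=uv$, \emph{all six} conditions of Corollary~\ref{cor:uuvv} hold (they are equivalent, so requiring one is requiring all). Condition~1 of the present corollary is literally condition (1) of Corollary~\ref{cor:uuvv}, so that equivalence is immediate from Theorem~\ref{theo:Lyndonuv}. It remains to see that condition~2, namely $w^\omega < v^\omega$ for every nontrivial factorization $w=uv$, is equivalent to the Lyndon property as well.

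For condition~2, I would argue as follows. Fix a nontrivial factorization $w=uv$; then $w = uv$ and $w^\omega = (uv)^\omega$, so the inequality $w^\omega < v^\omega$ is exactly condition (2) of Corollary~\ref{cor:uuvv} for the pair $(u,v)$. By Corollary~\ref{cor:uuvv}, this holds if and only if $u^\omega < v^\omega$ holds for that same pair. Hence "$w^\omega < v^\omega$ for every nontrivial factorization" is equivalent to "$u^\omega < v^\omega$ for every nontrivial factorization", which is condition~1, which by the previous paragraph characterizes Lyndon words. So conditions~1 and~2 are each equivalent to $w$ being Lyndon, and we are done.

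There is essentially no obstacle here: the corollary is a repackaging of Corollary~\ref{cor:uuvv} and Theorem~\ref{theo:Lyndonuv}, chosen to emphasize the two conditions that formally resemble the classical conditions (i) and (ii) of Definition~\ref{def:Lyndon}. The only thing to be slightly careful about is the direction of the implications and the quantifier: one must check the equivalence factorization-by-factorization, using $w=uv$ to identify $w^\omega$ with $(uv)^\omega$, and only then take the conjunction over all nontrivial factorizations. I would also remark, for the reader's benefit, that despite the resemblance these are genuinely different from Definition~\ref{def:Lyndon}(i)--(ii): there $u<v$ and $uv<v$ are lexicographic inequalities between finite words, whereas here they are inequalities between the associated periodic infinite words, and the two notions do not coincide in general (as the examples $ab<aba$ but $(aba)^\omega<(ab)^\omega$ in the introduction show).
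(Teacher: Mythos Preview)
Your proposal is correct and is exactly the argument the paper has in mind: it presents this corollary without proof, saying only that it ``highlights conditions (1) and (2) of Corollary~\ref{cor:uuvv}'' via Theorem~\ref{theo:Lyndonuv}, which is precisely the identification $w^\omega=(uv)^\omega$ you spell out. Your extra care about handling the quantifier factorization-by-factorization is appropriate and matches the intended reading.
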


The following result, due to Ufnarovskij (see also~\cite[Theorem 2, p.35]{U}) follows from Theorem~\ref{theo:Lyndonuv} and condition (5) of Corollary~\ref{cor:uuvv}.
It provides a characterization of a Lyndon word that takes into account its prefixes, instead than its suffixes.

\begin{corollary}[Ufnarovskij]
	\label{cor:U}
	A word $w$ is a Lyndon word if and only if for any nontrivial factorization $w=ps$, one has $p^\omega < w^\omega$.
\end{corollary}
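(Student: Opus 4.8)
The plan is to deduce Corollary~\ref{cor:U} directly from the machinery already assembled, so no new combinatorial work is needed. First I would observe that $w$ being a Lyndon word is, by Theorem~\ref{theo:Lyndonuv}, equivalent to the statement that for every nontrivial factorization $w=ps$ all six conditions of Corollary~\ref{cor:uuvv} hold (with the roles $u=p$, $v=s$). Among these, condition (5) reads $p^\omega < (ps)^\omega$, and since $ps=w$ this is precisely $p^\omega < w^\omega$. Hence the characterization in the corollary follows immediately.

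The one point that requires a moment's care is the direction: Theorem~\ref{theo:Lyndonuv} gives the biconditional ``$w$ is Lyndon $\iff$ for every nontrivial factorization $w=uv$ the six conditions hold'', and in particular ``$w$ is Lyndon $\iff$ for every nontrivial factorization $w=ps$, $p^\omega < w^\omega$''. The forward implication is then trivial. For the converse, if $p^\omega < w^\omega = (ps)^\omega$ holds for every nontrivial factorization $w=ps$, then condition (5) of Corollary~\ref{cor:uuvv} holds for every such factorization, hence condition (1) holds too (the six are equivalent), hence by Theorem~\ref{theo:Lyndonuv} $w$ is a Lyndon word. So the proof is essentially a one-line invocation of Theorem~\ref{theo:Lyndonuv} together with the identification of condition (5) with the displayed inequality.

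There is no real obstacle here; the ``hard part'', such as it is, was already done in establishing Corollary~\ref{cor:uuvv} (in particular the nontrivial equivalence of (5) with (1), which uses Theorem~\ref{theo:usvt}). I would therefore write the proof as: \emph{By Theorem~\ref{theo:Lyndonuv}, $w$ is a Lyndon word if and only if, for every nontrivial factorization $w=ps$, condition (5) of Corollary~\ref{cor:uuvv} holds, namely $p^\omega < (ps)^\omega = w^\omega$.} This is exactly the kind of short derivation the surrounding text ("follows from Theorem~\ref{theo:Lyndonuv} and condition (5) of Corollary~\ref{cor:uuvv}") already signals, so the proof should be just one or two sentences.
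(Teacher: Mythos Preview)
Your proposal is correct and is exactly the argument the paper indicates: it simply invokes Theorem~\ref{theo:Lyndonuv} and identifies condition~(5) of Corollary~\ref{cor:uuvv}, $u^\omega < (uv)^\omega$, with the inequality $p^\omega < w^\omega$. The paper does not even spell this out as a formal proof, just the sentence ``follows from Theorem~\ref{theo:Lyndonuv} and condition~(5) of Corollary~\ref{cor:uuvv}'', so your one-line version is precisely what is intended.
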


\begin{example}
	The word $w = aabab$ is a Lyndon word.
	We have $a^\omega = (aa)^\omega < (aaba)^\omega < (aab)^\omega < w^\omega$.
\end{example}

The following result is classical (see, for instance~\cite{L}).

\begin{theorem}
\label{theo:factorization}
Each word in $A^*$ can be factorized in a unique way as a nonincreasing product of Lyndon words.
\end{theorem}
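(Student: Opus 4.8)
The plan is to prove existence and uniqueness simultaneously, organised around the \emph{longest Lyndon prefix} of a word. Two ingredients are needed. Ingredient~A: if $u,v$ are Lyndon words with $u<v$, then $uv$ is a Lyndon word (indeed $u<uv<v$). Ingredient~B (the structural heart): if $w=\ell_1\ell_2\cdots\ell_n$ is \emph{any} nonincreasing product of Lyndon words, then $\ell_1$ is the longest prefix of $w$ that is itself a Lyndon word. Granting these, the theorem follows quickly.

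For Ingredient~A I would argue directly from Definition~\ref{def:Lyndon}(ii). First, $uv<v$: if $u$ is a prefix of $v$, write $v=uz$ with $z$ a nonempty proper suffix of $v$; since $v$ is Lyndon, $v<z$, and cancelling the common prefix $u$ gives $uv<uz=v$; if $u$ is not a prefix of $v$, the first disagreement of $u$ and $v$ lies within their common length and witnesses $uv<v$ at once. Next, $uv$ is Lyndon: for every proper nonempty suffix $y$ of $uv$, either $y$ is a suffix of the final $v$ --- then $uv<v=y$ if $y=v$, and $uv<v<y$ if $y$ is a proper suffix of $v$ (using $v$ Lyndon) --- or $y=u'v$ with $u'$ a proper nonempty suffix of $u$, in which case $u<u'$ (as $u$ is Lyndon), and, $u$ being too long to be a prefix of $u'$, the first disagreement of $u$ and $u'$ witnesses $uv<u'v=y$. (Alternatively this can be phrased through Corollary~\ref{cor:suffix}(1) and Corollary~\ref{cor:uuvv}, going back and forth between $u<v$ and $u^\omega<v^\omega$; the case work is the same.)

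For Ingredient~B, suppose $p$ is a Lyndon prefix of $w=\ell_1\cdots\ell_n$ with $|p|>|\ell_1|$, and write $p=\ell_1\cdots\ell_{j-1}q$ where $j\ge 2$ and $q$ is a nonempty prefix of $\ell_j$ (possibly $q=\ell_j$). Then $q$ is a proper nonempty suffix of $p$, so $p<q$ because $p$ is Lyndon. But $p$ begins with $\ell_1$, while $q$, being a prefix of $\ell_j$, satisfies $q\le\ell_j\le\ell_1$ by nonincreasingness; hence $\ell_1$ cannot be a prefix of $q$, and comparing $p$ with $q$ in the three possibilities --- $q$ a proper prefix of $\ell_1$, or $q=\ell_1$, or $q$ and $\ell_1$ first disagreeing with $q$ the smaller --- yields $q\le p$ in each case, contradicting $p<q$. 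Hence no Lyndon prefix exceeds $\ell_1$ in length, so $\ell_1$ is determined by $w$ alone. Now assemble. \emph{Existence}: factor $w$ greedily --- take $\ell_1$ to be its longest Lyndon prefix (which exists since each letter is a Lyndon word), write $w=\ell_1 w'$, and recurse on $w'$, producing $w=\ell_1\cdots\ell_n$ with all $\ell_i$ Lyndon; if $\ell_i<\ell_{i+1}$ for some $i$, then by Ingredient~A the word $\ell_i\ell_{i+1}$ is a Lyndon prefix of $\ell_i\cdots\ell_n$ strictly longer than $\ell_i$, contradicting the greedy choice, so the product is nonincreasing. \emph{Uniqueness}: given two nonincreasing Lyndon factorisations $w=\ell_1\cdots\ell_n=m_1\cdots m_p$, Ingredient~B forces $\ell_1=m_1$; cancel this common factor and induct on $|w|$.

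The step I expect to be the real obstacle --- and would write out in full --- is Ingredient~B, with Ingredient~A a gentler instance of the same difficulty: both collapse to elementary but slip-prone comparisons of the form ``$p$ starts with $\ell_1$, $q\le\ell_1$, yet $p<q$'', where one must make sure the prefix/first-disagreement subcases are exhaustive and that strictness is invoked at the right spots. The remainder is routine bookkeeping. I would finish with the ``infinite-order'' coda announced in the introduction: since for Lyndon words $\ell,\ell'$ one has $\ell\le\ell'$ if and only if $\ell^\omega\le(\ell')^\omega$, the factorisation produced above is nonincreasing for the infinite order as well.
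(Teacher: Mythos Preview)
The paper does not actually prove Theorem~\ref{theo:factorization}: it is stated as ``classical'' with a reference to Lothaire, and no argument is given. So there is no in-paper proof to compare against. Your proposal, on the other hand, is a correct self-contained proof, and it is essentially the standard one (longest-Lyndon-prefix greedy algorithm, plus the lemma that $u<v$ Lyndon implies $uv$ Lyndon) that one finds in Lothaire or in Duval's work.

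Two small wording points, neither affecting correctness. In Ingredient~A, ``cancelling the common prefix $u$'' is the wrong direction: you are \emph{prepending} $u$ to both sides of $v<z$ to get $uv<uz=v$. In Ingredient~B, ``$\ell_1$ cannot be a prefix of $q$'' should read ``$\ell_1$ cannot be a \emph{proper} prefix of $q$'', since you immediately (and correctly) treat $q=\ell_1$ as one of your three cases; the argument that rules out the proper-prefix possibility is simply that $\ell_1$ a proper prefix of $q$ would give $\ell_1<q\le\ell_1$. With those tweaks the write-up is clean.
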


The term ''nonincreasing'' in the previous theorem is referred to the classical lexicographical order.
The following theorem (cf.~\cite[Theorem 8]{BMRRS}) shows that the factorization in Lyndon words is non-increasing also with respect to the infinite order.

\begin{theorem}
\label{theo:uvL}
Let $u,v$ be two Lyndon words.
Then $u < v$ if and only if $u^\omega < v^\omega$,
\end{theorem}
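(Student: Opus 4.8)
The plan is to prove Theorem~\ref{theo:uvL} by separating the two implications and, in the harder direction, performing a case analysis on the combinatorial relationship between $u$ and $v$ (whether one is a prefix of the other, and if so whether the longer is a fractional power of the shorter).

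First I would dispose of the direction $u^\omega < v^\omega \implies u < v$. Here I would argue by contraposition: assume $u \ge v$ in the lexicographic order. If $u = v$ there is nothing to prove since then $u^\omega = v^\omega$, contradicting nothing but also not giving strict inequality; so we may assume $u > v$. Two subcases arise. If $v$ is not a prefix of $u$, then $u$ and $v$ differ at some position before $\min(|u|,|v|)$, and the same differing letters govern the comparison of $u^\omega$ and $v^\omega$, giving $v^\omega < u^\omega$, a contradiction. If $v$ \emph{is} a proper prefix of $u$, write $u = vz$ with $z$ nonempty; since $v$ is a Lyndon word, Corollary~\ref{cor:suffix}(2) applied to the factorization... actually more directly: $v$ Lyndon and $u = vz$ a nontrivial factorization of $u$ would require $u$ Lyndon context, which we do not have. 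Instead I would use that $v$ is Lyndon together with Corollary~\ref{cor:U} or Corollary~\ref{cor:suffix}: since $u$ is Lyndon and $v$ is a proper prefix of $u$, $u = vz$, Corollary~\ref{cor:U} gives $v^\omega < u^\omega$, again contradicting $u^\omega < v^\omega$. This settles the reverse implication.

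For the forward direction $u < v \implies u^\omega < v^\omega$, I would again split on whether $u$ is a prefix of $v$. If $u$ is not a prefix of $v$, then neither is $v$ a prefix of $u$ (a shorter word that is a prefix would force the prefix relation the other way only if it is shorter; but if $v$ were a prefix of $u$ then $v < u$ lexicographically, contradicting $u < v$) — more carefully, $u < v$ with $u$ not a prefix of $v$ means $u$ and $v$ first differ at a genuine letter position with the $u$-letter smaller, and this same position governs $u^\omega$ versus $v^\omega$, so $u^\omega < v^\omega$. The remaining case is $u$ a proper prefix of $v$, say $v = uz$ with $z \ne 1$. Since $v$ is Lyndon, Corollary~\ref{cor:suffix}(2) — or rather the equivalent phrasing via Corollary~\ref{cor:U} applied to the factorization $v = uz$ — yields $u^\omega < v^\omega$ directly. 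That is in fact exactly Ufnarovskij's characterization: $v$ Lyndon and $u$ a proper (nontrivial) prefix of $v$ forces $u^\omega < v^\omega$.

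The main obstacle is purely bookkeeping: making sure that in the "prefix" subcases the right corollary is being invoked with the right word being Lyndon. Note the pleasant asymmetry — when $u$ is a prefix of $v$ we use that $v$ is Lyndon (via Ufnarovskij, Corollary~\ref{cor:U}), and when $v$ is a prefix of $u$ we use that $u$ is Lyndon (again via Corollary~\ref{cor:U}); the "non-prefix" case needs no Lyndon hypothesis at all and is just the elementary observation that a genuine letter mismatch propagates to the infinite powers. Once the case split is organized this way, each case is a one-line application of a previously established corollary, so there is no hard computation; the only care needed is to confirm that $u$ and $v$ distinct Lyndon words cannot have $u^\omega = v^\omega$ (true, since that would make $u,v$ powers of a common word, forcing $u = v$ by primitivity of Lyndon words), so that all the inequalities above are genuinely strict and the biconditional closes up cleanly.
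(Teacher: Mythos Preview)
Your proof is correct and follows the same overall shape as the paper's: split on whether one word is a prefix of the other, handle the non-prefix case by the obvious propagation of the mismatching letter, and use the Lyndon hypothesis only in the prefix case.

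There are two small differences worth noting. First, in the prefix subcase you invoke Ufnarovskij's characterization (Corollary~\ref{cor:U}): if $u$ is a proper prefix of the Lyndon word $v$, then $u^\omega < v^\omega$ directly. The paper instead uses the \emph{suffix} definition of Lyndon: writing $v = uy$, the fact that $v$ is Lyndon gives $v < y$, hence $u < v < y$ and so $u^2 < uy = v$; iterating (implicitly) yields $u^\omega < v^\omega$. Both tools are already available at this point, so both arguments are legitimate; yours is arguably more direct. Second, for the reverse implication the paper simply observes that if $v < u$ then the forward direction applied with the roles of $u$ and $v$ swapped gives $v^\omega < u^\omega$, a contradiction --- whereas you redo the full case analysis. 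Your version is correct but does the same work twice; the paper's symmetry trick is shorter. Your closing remark that distinct Lyndon words cannot satisfy $u^\omega = v^\omega$ (by primitivity) is a nice sanity check that the paper leaves implicit.
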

\begin{proof}
Suppose that $u < v$.
If $u$ is not a prefix of $v$, then obviously  $u^\omega < v^\omega$.

If $u$ is a prefix of $v$, then  $v = uy$, for some nonempty word $y$.
From $u < v$ and $v < y$ (since $v$ is a Lyndon word) we get $u^2 < uy = v$ , and thus $u^\omega < v^\omega$.

To prove the converse implication, let us suppose that $u^\omega  < v^\omega$.
By contradiction, suppose $v < u$.
By the first part of the proof we would have $v^\omega < u^\omega$, which gives us a contradiction.
\end{proof}


In the following theorem we characterize the last element of the factorization in Lyndon words by using the infinite order.

\begin{theorem}
\label{theo:factorization2}
Let $w = \ell_1 \ell_2 \cdots \ell_n$, with $\ell_i$ Lyndon words such that $\ell_1^\omega \geq \ell_2^\omega \geq \ldots \geq \ell_n^\omega$.
Then $\ell_n$ is the shortest among all nontrivial suffixes $s$ of $w$ such that $s^\omega$ is minimum.
\end{theorem}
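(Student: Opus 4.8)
The plan is to prove the statement in two parts: first that $\ell_n^\omega$ is the minimum value of $s^\omega$ over all nontrivial suffixes $s$ of $w$, and second that $\ell_n$ is the shortest such suffix. For the first part, I would proceed by induction on $n$. The base case $n=1$ is trivial since $w=\ell_1$ is itself the only nontrivial suffix. For the inductive step, write $w = \ell_1 w'$ where $w' = \ell_2\cdots\ell_n$; any nontrivial suffix $s$ of $w$ is either a suffix of $w'$ (nontrivial), or has the form $s = p\,\ell_2\cdots\ell_n$ for some nonempty suffix $p$ of $\ell_1$ (when $s$ is longer than $w'$; the case $s = w'$ itself is covered by the induction hypothesis applied to $w'$). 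In the first case the induction hypothesis gives $s^\omega \ge \ell_n^\omega$. In the second case I need $s^\omega \ge \ell_n^\omega$, where $s = p\,\ell_2\cdots\ell_n$ and $p$ is a nonempty suffix of the Lyndon word $\ell_1$.

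For that second case, the key tool is Corollary~\ref{cor:suffix}: since $\ell_1$ is a Lyndon word and $p$ is a proper suffix (when $p \neq \ell_1$), we have $\ell_1^\omega \le p^\omega$, actually $\ell_1^\omega < p^\omega$ for proper $p$, and $\ell_1^\omega = p^\omega$ when $p = \ell_1$, in which case $s = w$ and we may instead regard $s$ as starting with $\ell_1$. More directly: I want to compare $(p\,\ell_2\cdots\ell_n)^\omega$ with $\ell_n^\omega$. Since $\ell_1^\omega \ge \ell_n^\omega$ and $\ell_1^\omega \le p^\omega$ (by Corollary~\ref{cor:suffix}, as $p$ is a suffix of the Lyndon word $\ell_1$), we get $p^\omega \ge \ell_1^\omega \ge \ell_n^\omega$. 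Now I apply Lemma~\ref{lem:p<s}: comparing $p^\omega$ with $\ell_n^\omega$, if neither is a prefix of the other then part~(i), suitably applied, already yields $(p\cdots)^\omega \ge \ell_n^\omega$; the delicate subcase is when $\ell_n$ is a fractional power of $p$ or vice versa, and there I would use part~(ii) of Lemma~\ref{lem:p<s} together with the fact that $\ell_n$ is Lyndon, hence unbordered or at least strictly dominated by its proper suffixes, to push the comparison through the remaining factors $\ell_2\cdots\ell_{n-1}$ and land on an occurrence of $\ell_n^\omega$ or something $\ge \ell_n^\omega$. I expect this fractional-power bookkeeping to be the main obstacle, since one must carefully track where the comparison "takes place" and invoke Lemma~\ref{lem:comparison2} to propagate inequalities.

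For the minimality of length, suppose $s$ is a nontrivial suffix of $w$ with $s^\omega = \ell_n^\omega$ and $|s| < |\ell_n|$; I will derive a contradiction. Since $s^\omega = \ell_n^\omega$, the words $s$ and $\ell_n$ are powers of a common word by the standard fact quoted in Section~\ref{sec:infinite}; as $\ell_n$ is a Lyndon word it is primitive, so $\ell_n$ is primitive and $s^\omega = \ell_n^\omega$ forces $\ell_n$ to be a power of the primitive root of $s$, whence $\ell_n = s^j$ for some $j \ge 2$ (using $|s| < |\ell_n|$). But then $\ell_n = s^j$ with $j \ge 2$ contradicts primitivity of $\ell_n$. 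Hence $|s| \ge |\ell_n|$, and since $\ell_n$ is itself a suffix of $w$ achieving the minimum, it is the shortest one. Finally I should also confirm that $\ell_n$ does achieve the minimum (not merely a lower bound), which is immediate since $\ell_n$ is one of the suffixes $s$ under consideration, so the infimum is attained at $\ell_n$; combined with the first part this shows every suffix $s$ satisfies $s^\omega \ge \ell_n^\omega$ with equality characterizing $s$ up to the length bound just established.
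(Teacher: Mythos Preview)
Your approach is quite different from the paper's, and both work, but the paper's is much shorter. The paper argues \emph{indirectly}: let $z$ be the shortest nontrivial suffix achieving the minimum of $s^\omega$. Then $z$ is a Lyndon word (every proper suffix of $z$ is a shorter suffix of $w$, hence has strictly larger $\omega$-power), so one writes $w=uz$, takes the Lyndon factorization $u=\ell_1'\cdots\ell_k'$, observes that $(\ell_k' z)^\omega>z^\omega$ by minimality, hence $\ell_k'^\omega>z^\omega$ by Corollary~\ref{cor:uuvv}, and concludes $z=\ell_n$ by uniqueness of the Lyndon factorization. No induction, no case analysis on where a comparison takes place.

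Your direct inductive approach is sound, but the ``fractional-power bookkeeping'' you flag as the main obstacle actually evaporates. In the case $s=p\,\ell_2\cdots\ell_n$ with $p$ a nonempty suffix of $\ell_1$, you have $p^\omega\ge\ell_1^\omega\ge\ell_n^\omega$. If $p^\omega=\ell_n^\omega$ then (both being tied to primitive words) $p=\ell_1$ and all $\ell_i$ coincide, giving $s^\omega=\ell_n^\omega$ trivially. If $p^\omega>\ell_n^\omega$, then $p$ \emph{cannot} be a fractional power of $\ell_n$: any fractional power $\ell_n^k r$ with $r$ a proper prefix of $\ell_n$ satisfies $r^\omega<\ell_n^\omega$ by Corollary~\ref{cor:U}, hence $(\ell_n^k r)^\omega<\ell_n^\omega$ by Corollary~\ref{cor:uuvv} (condition~(6), iterated). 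So Lemma~\ref{lem:p<s}(ii) applies immediately and gives $\ell_n^\omega<(p\,\ell_2\cdots\ell_n)^\omega$. There is no delicate subcase.

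One small slip in your length-minimality argument: from $s^\omega=\ell_n^\omega$ you cannot conclude $\ell_n=s^j$ unless $s$ is primitive. The clean statement is that $s$ and $\ell_n$ are powers of a common primitive word, which must be $\ell_n$ itself; hence $s=\ell_n^k$ with $k\ge 1$, contradicting $|s|<|\ell_n|$. Even simpler: any suffix $s$ with $|s|<|\ell_n|$ is a proper suffix of the Lyndon word $\ell_n$, so $s^\omega>\ell_n^\omega$ strictly by Corollary~\ref{cor:suffix}.
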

\begin{proof}
Let $z$ be the shortest among all nontrivial suffixes of $w$ such that $s^\omega$ is minimum.
If $w = z$ then $w$ is a Lyndon word and $\ell_1 = \ell_n = z$.

Otherwise, we can write $w = uz$.
Consider the factorization of $u$ in Lyndon words: $u = \ell_1' \ell_2' \cdots \ell_k'$, with $\ell_1'^\omega \ge \ell_2'^\omega \ge \cdots \ge \ell_k'^\omega$.
By hypothesis, $(\ell_k' z)^\omega  > z^\omega$.
Then, using Corollary~\ref{cor:uuvv}, one has $\ell_k'^\omega > z^\omega$.
It follows that $\ell_1' \ell_2' \cdots \ell_k' z$ is a non-increasing factorization of $w$ in Lyndon words.
Since the factorization is unique, we have that $z = \ell_n$.
\end{proof}

\begin{example}
\label{ex:w=ababaab}
Let $w=ababaab$.
Its non-increasing factorization into Lyndon words is $w=(ab)(ab)(aab)$.
One can check that $(aab)^\omega < (abaab)^\omega < w^\omega < (ab)^\omega < (baab)^\omega < (babaab)^\omega < b^\omega$.
\end{example}


In the previous results we gave a characterization of the last element of the factorization of a word in Lyndon words.
Now, we focus on the first factor.
This result is motivated by point 1 of Corollary~\ref{cor:suffix}: the fact that a word $w$ is not a Lyndon word implies the existence of a prefix $u$
such that $u^\omega \geq v^\omega$, where $v$ is the corresponding suffix.
If one chooses the shortest prefix satisfying this property, this turns out to be the first factor in the Lyndon factorization.
In the same vein, it is motivated by Ufnarovskijj's Theorem (Corollary~\ref{cor:U} above).

\begin{theorem}
	\label{theo:shortest}
	Let $w = \ell_1 \ell_2 \cdots \ell_n$ be the nonincreasing factorization into Lyndon words of a finite nonempty word $w$.
	\begin{enumerate}
		\item[\rm 1.] The word $\ell_1$ is the shortest nontrivial prefix $p$ of $w$ such that, when writing $w=ps$, one has either $s=1$ or $p^\omega \ge s^\omega$.
		\item[\rm 2.] The word $\ell_1$ is the shortest nontrivial prefix $p$ of $w$ such that $p^\omega \geq w^\omega$.
	\end{enumerate}
\end{theorem}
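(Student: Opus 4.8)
The plan rests on a single observation. For any nontrivial factorization $w = ps$, Corollary~\ref{cor:uuvv} (equivalence of conditions (5) and (1), with $u = p$ and $v = s$) gives $p^\omega < (ps)^\omega \Longleftrightarrow p^\omega < s^\omega$, that is, $p^\omega \ge w^\omega \Longleftrightarrow p^\omega \ge s^\omega$. Hence the nontrivial prefixes $p$ satisfying the condition of Part~1 are exactly those satisfying the condition of Part~2 (when $s = 1$, i.e.\ $p = w$, both conditions hold trivially). So it suffices to prove the single statement: $\ell_1$ is the shortest nontrivial prefix $p$ of $w$ with $p^\omega \ge w^\omega$. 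I would split this into showing that $\ell_1$ has the property and that no shorter prefix does.

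First I would record the elementary fact that if $s = m_1 m_2 \cdots m_r$ is a product of Lyndon words with $m_1^\omega \ge m_2^\omega \ge \cdots \ge m_r^\omega$, then $s^\omega \le m_1^\omega$. This follows by induction on $r$: writing $s = m_1 s'$ with $s' = m_2 \cdots m_r$, the inductive hypothesis gives $s'^\omega \le m_2^\omega \le m_1^\omega$; if the inequality is strict, Corollary~\ref{cor:uuvv} (equivalence of (1) and (6), with $u = s'$ and $v = m_1$) yields $(m_1 s')^\omega < m_1^\omega$, while if $s'^\omega = m_1^\omega$ then $s'$ and $m_1$ are powers of a common word, hence so is $s$, and $s^\omega = m_1^\omega$. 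Applied to $s = \ell_2 \cdots \ell_n$ this gives $(\ell_2 \cdots \ell_n)^\omega \le \ell_2^\omega \le \ell_1^\omega$ (vacuous if $n = 1$), and then the observation of the first paragraph upgrades this to $\ell_1^\omega \ge (\ell_1 \ell_2 \cdots \ell_n)^\omega = w^\omega$. So $\ell_1$ has the property.

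For minimality, let $p$ be a nonempty prefix of $w$ with $|p| < |\ell_1|$, so $p$ is a proper nonempty prefix of the Lyndon word $\ell_1$. By Ufnarovskij's characterization (Corollary~\ref{cor:U}), $p^\omega < \ell_1^\omega$. Writing $\ell_1 = p p'$ with $p' \ne 1$, Corollary~\ref{cor:uuvv} (equivalence of (5) and (1)) turns $p^\omega < (p p')^\omega$ into $p^\omega < p'^\omega$. The crucial step, which I expect to be the main obstacle, is to check that $p'$ is \emph{not} a fractional power of $p$. Granting this, Lemma~\ref{lem:p<s}(ii), applied with $u = p$ and $v = p'$, gives $p^\omega < (p' y)^\omega$ for every finite word $y$; taking $y = \ell_2 \cdots \ell_n$ and writing $w = ps$ with $s = p' \ell_2 \cdots \ell_n$, we obtain $p^\omega < s^\omega$, hence $p^\omega < w^\omega$ by the observation of the first paragraph. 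Thus $p$ does not have the property, so $\ell_1$ is the shortest prefix that does.

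It remains to dispatch the obstacle. Suppose $p' = p^k q_1$ with $p = q_1 q_2$ and $k \ge 0$, so that $\ell_1 = p p' = p^{k+1} q_1$. If $q_1 \ne 1$, then $q_1$ is a nonempty prefix of $p$, hence of $\ell_1$, and also a suffix of $\ell_1 = p^{k+1} q_1$; since $|q_1| \le |p| < |\ell_1|$, this makes $q_1$ a border of $\ell_1$. If $q_1 = 1$, then $p' = p^k$ with $k \ge 1$ (as $p' \ne 1$), so $\ell_1 = p^{k+1}$, and $p$ itself is a nonempty prefix and suffix of $\ell_1$ with $|p| < |\ell_1|$, again a border. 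Either way $\ell_1$ is bordered, contradicting the fact that a Lyndon word is unbordered (a proper nonempty prefix of a Lyndon word is lexicographically smaller than it, a proper nonempty suffix larger). Hence $p'$ is not a fractional power of $p$, completing the argument. No computation is heavy here; the only subtlety is this border dichotomy, together with the harmless degenerate case $n = 1$ (where $s = p'$ and $p^\omega < s^\omega$ is immediate).
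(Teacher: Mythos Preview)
Your proof is correct and follows essentially the same route as the paper's: both establish $\ell_1^\omega \ge (\ell_2\cdots\ell_n)^\omega$ by the inductive argument you give (stated separately in the paper as Corollary~\ref{cor:l_1}), and for minimality both write $\ell_1 = pp'$, use that $\ell_1$ is unbordered to rule out $p'$ being a fractional power of $p$, and then apply Lemma~\ref{lem:p<s}(ii). The only cosmetic differences are that you front-load the equivalence of Parts~1 and~2 via Corollary~\ref{cor:uuvv} (the paper proves Part~1 first and derives Part~2 at the end), and you reach $p^\omega < p'^\omega$ via Ufnarovskij plus Corollary~\ref{cor:uuvv} whereas the paper cites Corollary~\ref{cor:suffix} directly.
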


In order to prove Theorem~\ref{theo:shortest} we need a preliminary result which refines Corollary~\ref{cor:uuvv} in the case of the usual lexicographical order.

Note that, for any infinite words $s,t$ such that $s < t$, with $<$ the classical order, and for any finite word $w$, one has $ws < wt$.

\begin{corollary}
	\label{cor:l_1}
	If $\ell_1, \ell_2, \ldots, \ell_n$, with $n \ge 2$, are Lyndon words such that $\ell_1^\omega \ge \ell_2^\omega \ge \cdots \geq \ell_n^\omega$, then $\ell_1^\omega \ge (\ell_{2} \cdots \ell_n)^\omega$.
\end{corollary}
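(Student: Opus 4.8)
The plan is to prove this by induction on $n$, reducing everything to the two-word case handled by Corollary~\ref{cor:uuvv}. For $n = 2$ the statement is the hypothesis $\ell_1^\omega \ge \ell_2^\omega$ itself, so there is nothing to do. For the inductive step, suppose $n \ge 3$ and set $v = \ell_2 \cdots \ell_n$. Applying the induction hypothesis to the $n-1$ Lyndon words $\ell_2, \ldots, \ell_n$ gives $\ell_2^\omega \ge (\ell_3 \cdots \ell_n)^\omega$, and more to the point I want to compare $\ell_1$ with the concatenation $v = \ell_2 (\ell_3 \cdots \ell_n)$.

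First I would treat the strict case $\ell_1^\omega > \ell_2^\omega$. Write $v = \ell_2 y$ with $y = \ell_3 \cdots \ell_n$ (possibly empty only when $n=2$, so here $y \ne 1$). Since $\ell_1^\omega > \ell_2^\omega$, I want to invoke Corollary~\ref{cor:uuvv}: taking $u = \ell_2$ in the roles there, the equivalence of (1) and (5) says $\ell_2^\omega < (\ell_2 x)^\omega$ for any $x$ with $\ell_2^\omega < x^\omega$ — but I must be careful about which inequalities are strict. The cleaner route: from $\ell_1^\omega > \ell_2^\omega$ and $\ell_2^\omega \ge (\ell_3\cdots\ell_n)^\omega$ (induction hypothesis) I get $\ell_1^\omega > (\ell_3\cdots\ell_n)^\omega$. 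Now I need $\ell_1^\omega \ge (\ell_2 \ell_3 \cdots \ell_n)^\omega$; since $(\ell_2\cdots\ell_n)^\omega$ lies "between" $\ell_2^\omega$ and $(\ell_3\cdots\ell_n)^\omega$ in a suitable sense, and both of those are $\le \ell_1^\omega$ (the first strictly), one expects $(\ell_2\cdots\ell_n)^\omega < \ell_1^\omega$. Concretely, I would argue: if $(\ell_2\cdots\ell_n)^\omega \ge \ell_1^\omega$, then since $\ell_2$ is a prefix of $\ell_2\cdots\ell_n$ and $\ell_2^\omega < \ell_1^\omega$, Lemma~\ref{lem:p<s}(ii) (or the Corollary following Corollary~\ref{cor:uuvv}) would force a contradiction by pinning down where the comparison takes place. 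So the strict case should follow from Lemma~\ref{lem:p<s} applied with $u = \ell_2$, $v = \ell_1$.

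The remaining case is the equality chain $\ell_1^\omega = \ell_2^\omega = \cdots = \ell_m^\omega > \ell_{m+1}^\omega \ge \cdots$, or all equal. When all the $\ell_i^\omega$ are equal, each $\ell_i$ is a power of a common primitive word, but Lyndon words are primitive, so $\ell_1 = \ell_2 = \cdots = \ell_n$ and then $(\ell_2\cdots\ell_n)^\omega = \ell_1^\omega = \ell_1^\omega$, giving equality. When some are equal and then it drops, I collapse the initial block: $\ell_1 = \cdots = \ell_m =: \ell$, so $\ell_2\cdots\ell_n = \ell^{m-1}(\ell_{m+1}\cdots\ell_n)$, and by the strict case applied to $\ell$ versus $\ell_{m+1}\cdots\ell_n$ (noting $\ell^\omega > \ell_{m+1}^\omega \ge (\ell_{m+2}\cdots\ell_n)^\omega$ by induction) I get $\ell^\omega \ge (\ell^{m-1}\ell_{m+1}\cdots\ell_n)^\omega$, which is exactly $\ell_1^\omega \ge (\ell_2\cdots\ell_n)^\omega$.

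The main obstacle I anticipate is bookkeeping the strict-versus-nonstrict inequalities cleanly: Corollary~\ref{cor:uuvv} is stated with strict $<$ throughout, so to handle $\ge$ I either pass to contrapositives (if $(\ell_2\cdots\ell_n)^\omega > \ell_1^\omega$ then derive $\ell_2^\omega > \ell_1^\omega$, contradicting the hypothesis) or split off the equality block as above. The contrapositive approach is probably the slickest: assume $(\ell_2 \cdots \ell_n)^\omega > \ell_1^\omega$ for contradiction, write $\ell_2 \cdots \ell_n = \ell_2 \cdot (\ell_3\cdots\ell_n)$, and apply the equivalence (1)$\Leftrightarrow$(6) of Corollary~\ref{cor:uuvv} with $u = \ell_2$, $v = \ell_1$, together with the induction hypothesis $\ell_2^\omega \ge (\ell_3\cdots\ell_n)^\omega$, to conclude $\ell_2^\omega > \ell_1^\omega$, contradicting $\ell_1^\omega \ge \ell_2^\omega$. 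I expect this to close the argument with only a short case check for when the relevant words share a prefix.
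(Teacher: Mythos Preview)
Your framework—induction on $n$ together with Corollary~\ref{cor:uuvv}—is exactly the paper's, but you are making the inductive step much harder than necessary and the specific lemma applications you propose do not quite work.

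The paper's one-line inductive step is this: from the induction hypothesis $(\ell_3\cdots\ell_n)^\omega \le \ell_2^\omega$, apply the equivalence (1)$\Leftrightarrow$(6) of Corollary~\ref{cor:uuvv} with $u=\ell_3\cdots\ell_n$ and $v=\ell_2$ to obtain $(\ell_2\ell_3\cdots\ell_n)^\omega \le \ell_2^\omega$ (the equality case $u^\omega=v^\omega$ is immediate since then $(vu)^\omega=v^\omega$). Then chain with the hypothesis $\ell_1^\omega \ge \ell_2^\omega$. No case analysis on strict versus equal, no appeal to Lemma~\ref{lem:p<s}, and no use of the Lyndon property beyond the statement.

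Your detours run into trouble because you try to compare $(\ell_2\cdots\ell_n)^\omega$ directly with $\ell_1^\omega$ rather than with $\ell_2^\omega$. The substitution you name for the contrapositive, ``(1)$\Leftrightarrow$(6) with $u=\ell_2$, $v=\ell_1$'', yields only $\ell_2^\omega<\ell_1^\omega \Leftrightarrow (\ell_1\ell_2)^\omega<\ell_1^\omega$, which says nothing about $(\ell_2\cdots\ell_n)^\omega$. And invoking Lemma~\ref{lem:p<s} with $u=\ell_2$, $v=\ell_1$ does not close the strict case either: part~(i) requires that neither of $\ell_1,\ell_2$ be a prefix of the other, which fails for instance when $\ell_2=a$, $\ell_1=ab$; part~(ii) bounds $(\ell_1 y)^\omega$ from below, not $(\ell_2 x)^\omega$ from above. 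The fix is simply to pivot through $\ell_2^\omega$ as above; once you do that, the equality-block discussion and the contrapositive become unnecessary.
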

\begin{proof}
	The case $n=2$, it is trivial.
	Let consider the case $n \ge 3$.
	By induction hypothesis we have $\ell_2^\omega \ge (\ell_3 \cdots \ell_n)^\omega$.
	From Corollary~\ref{cor:uuvv} it follows that $\ell_{2}^\omega \ge (\ell_{2}\cdots \ell_n)^\omega$.
	Hence, $\ell_1^\omega \ge (\ell_{2} \cdots \ell_n)^\omega$.
\end{proof}

It order to prove Theorem~\ref{theo:shortest} let us recall the well-known fact that all Lyndon words are unbordered (see, for instance,~\cite{CK}).

\begin{proof}[Proof of Theorem~\ref{theo:shortest}]
	Let us prove the first assertion.
	When $n=1$, then $w = \ell_1$ is a Lyndon word and the result is true by point 1 of Corollary~\ref{cor:suffix}.
	
	Suppose now that $n \geq 2$.
	Then, by Corollary~\ref{cor:l_1}, we have $\ell_1^\omega \ge (\ell_{2} \cdots \ell_n)^\omega$.
	Let $p$ be a nontrivial prefix of $w$ shorter then $\ell_1$.
	Thus, we have a nontrivial factorization $\ell_1 = pq$ for some $q \ne 1$.
	By Corollary~\ref{cor:suffix}, we know that $p^\omega < q^\omega$.
	Since $\ell_1$ is unbordered, $q$ cannot be a fractional power of $p$.
	Thus, by point (ii) of Lemma~\ref{lem:p<s}, one has $p^\omega < (q \ell_2 \cdots \ell_n)^\omega$, which prove the first part of the theorem.
	
	The second assertion just follows from the first one.
	Indeed, using Corollary~\ref{cor:uuvv}, we have that if $s \ne 1$, then $p^\omega \ge s^\omega$ is equivalent to $p^\omega \ge (ps)^\omega$.
\end{proof}

\begin{example}
	\label{ex:somegalois}
	Let $w = ababaab$.
	As seen in Example~\ref{ex:w=ababaab}, its nonincreasing factorization into Lyndon words is $w = (ab)(ab)(aab)$.
	One can check that $(ab)^\omega > w^\omega > (abaab)^\omega$ while $a^\omega < w^\omega < (babaab)^\omega$.
\end{example}

\section{Left Lyndon tree and prefix standardization}
\label{sec:tree}

In~\cite{HR}, the authors associate with each Lyndon word $w$ an increasing tree, based on the suffixes of $w$; they show that the completion of this tree, with leaves appropriately labeled by the letters of $w$, is equal to the tree obtained by iterating the right standard factorization of $w$ (equivalently the Lie bracketing associated with $w$).

In this section we give a similar construction and result, based on the prefixes of $w$ instead that on the suffixes.
This construction is motivated by Ufnarovskij's Theorem (Corollary ~\ref{cor:U}).

\subsection{Left Lyndon tree}

In~\cite{V} (cf. also~\cite{BLRS}) Viennot introduced the notion of left standard factorization of a Lyndon word.
Let us consider a Lyndon word $w$ having length at least 2.
The {\em left standard factorization} of $w$ is the factorization $w = uv$, where $u$ is the longest nonempty proper prefix of $w$ which is a Lyndon word.

The following is a well-known result (see~\cite[p. 14]{V}).

\begin{proposition}
\label{pro:lyndonleft}
Let $w=uv$ be a Lyndon word with its left standard factorization.
Then both $u$ and $v$ are Lyndon words and $u<v$.
Moreover, either $v$ is a letter, or given its left standard factorization $v= v_1 v_2$ one has $v_1 \le u$.
\end{proposition}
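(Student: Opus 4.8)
## Proof Plan for Proposition~\ref{pro:lyndonleft}

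The plan is to handle the three claims in sequence, relying heavily on the characterizations of Lyndon words via the infinite order (Corollary~\ref{cor:suffix} and Corollary~\ref{cor:U}) and on the lemmas of Section~\ref{sec:infinite}. First I would establish that $u$ is a Lyndon word --- but this is immediate from the \emph{definition} of the left standard factorization, since $u$ is chosen to be a Lyndon word. The real content is that $v$ is a Lyndon word and that $u<v$. For $v$, I would argue by maximality of $u$: suppose $v=v'v''$ is a nontrivial factorization; I must show $v<v''$ (equivalently $v^\omega<v''^\omega$ by Corollary~\ref{cor:suffix}, since $v''$ is also a suffix of $w$). Since $w=uv'v''$ and $w$ is Lyndon, Corollary~\ref{cor:suffix} gives $w^\omega<v''^\omega$. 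One then needs to pass from $w^\omega<v''^\omega$ to $v^\omega<v''^\omega$; here the key point is that $uv'$ cannot be a Lyndon word strictly longer than $u$ (by maximality), so applying Corollary~\ref{cor:U} to the prefix $u$ of the Lyndon word $w$ together with a careful use of Corollary~\ref{cor:uuvv} should let me replace the prefix $uv'$ by nothing and conclude $v^\omega<v''^\omega$. Alternatively, and perhaps more cleanly, I would show directly that $v$ is Lyndon by noting that any proper suffix $s$ of $v$ is a proper suffix of $w$, hence $w^\omega<s^\omega$; combined with $u<w$ this forces $v^\omega\le w^\omega<s^\omega$ once one checks $v^\omega\le w^\omega$, which follows because $w=uv$ with $u$ Lyndon and $u^\omega<v^\omega$ (Corollary~\ref{cor:suffix}) gives $w^\omega=(uv)^\omega$ between $u^\omega$ and $v^\omega$ by the Corollary after Corollary~\ref{cor:uuvv}.

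Next, the inequality $u<v$: since $w=uv$ is Lyndon, condition (i) of Definition~\ref{def:Lyndon} gives $u<v$ directly, with no extra work. (One may also want $u^\omega<v^\omega$, which is Corollary~\ref{cor:suffix}.) So this claim costs nothing.

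The substantive part is the last sentence: if $v$ is not a letter and $v=v_1v_2$ is its own left standard factorization, then $v_1\le u$. Here I would argue by contradiction: suppose $u<v_1$. Since $u,v_1$ are both Lyndon words, Theorem~\ref{theo:uvL} gives $u^\omega<v_1^\omega$. I claim $uv_1$ is then a Lyndon word: using condition (2) of Corollary~\ref{cor:suffix} it suffices to check $(uv_1)^\omega<s^\omega$ for every proper suffix $s$ of $uv_1$; for suffixes inside $v_1$ this follows from $v_1$ being Lyndon plus the fact that $(uv_1)^\omega<v_1^\omega$ (Corollary~\ref{cor:uuvv}, since $u^\omega<v_1^\omega$), and for suffixes of the form $u'v_1$ with $u'$ a proper suffix of $u$ one uses that $u$ is Lyndon so $u^\omega<u'^\omega$, then Lemma~\ref{lem:p<s} or Theorem~\ref{theo:usvt} to propagate the inequality after appending $v_1$. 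Now $uv_1$ is a Lyndon word which is a prefix of $w$ (since $uv_1v_2=uv=w$) and it is strictly longer than $u$ --- contradicting the maximality of $u$ in the left standard factorization of $w$. Hence $v_1\le u$.

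The main obstacle I anticipate is the propagation step: knowing an inequality between $u^\omega$ and $v_1^\omega$ (or $u^\omega$ and $u'^\omega$) and needing the corresponding inequality between $(uv_1)^\omega$ and $(u'v_1)^\omega$, i.e.\ concatenating a common word on the right \emph{inside the periodic part}. This is exactly the content of Lemma~\ref{lem:p<s} and Theorem~\ref{theo:usvt}, but one must be careful about the fractional-power edge cases: whether $v_1$ is a fractional power of $u$, whether $u'$ is a prefix of $u$, etc. Since $u$ is unbordered (all Lyndon words are, as recalled before the proof of Theorem~\ref{theo:shortest}), a proper suffix $u'$ of $u$ cannot have $u$ as a prefix, which kills the bad case for part (i) of Lemma~\ref{lem:p<s}; and $v_1$ being strictly shorter than $u$ (as the first factor of the left standard factorization of the proper suffix $v$ of $w$, with $|v|<|w|$, wait --- one needs $|v_1|\le|u|$ which is what we are proving, so instead use $|v_1|<|v|\le|w|-|u|<|w|$, hence $v_1$ short) handles the other. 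Getting these case distinctions exactly right, rather than any deep idea, is where the care goes.
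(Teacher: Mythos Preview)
The paper does not actually prove this proposition; it is stated as a well-known result with a reference to Viennot~\cite[p.~14]{V}. So there is no proof in the paper to compare against, and your proposal must be judged on its own merits.

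Your treatment of $u<v$ is fine (it is immediate from Definition~\ref{def:Lyndon}(i)), and your strategy for the last clause $v_1\le u$ is the classical one and is correct: assuming $u<v_1$ with $u,v_1$ both Lyndon, the word $uv_1$ is Lyndon, and since $uv_1$ is a proper prefix of $w$ strictly longer than $u$, this contradicts the maximality of $u$. Your infinite-order verification that $uv_1$ is Lyndon goes through once you use that $u$ is unbordered, so a proper suffix $u'$ of $u$ is never a prefix of $u$ and Lemma~\ref{lem:p<s}(i) applies directly to compare $(uv_1)^\omega$ and $(u'v_1)^\omega$; the length worries in your final paragraph are not needed.

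There is, however, a genuine gap in your argument that $v$ is Lyndon. In your ``cleaner'' approach you aim for the chain $v^\omega\le w^\omega<s^\omega$ and claim that $v^\omega\le w^\omega$ follows from the corollary after Corollary~\ref{cor:uuvv}. But that corollary gives $u^\omega<(uv)^\omega<v^\omega$, i.e.\ $w^\omega<v^\omega$, which is the \emph{opposite} inequality; so the chain does not close. Your first approach (``should let me replace the prefix $uv'$ by nothing'') is too vague to stand on its own. The clean fix is to recycle your own last-clause argument: let $v=m_1\cdots m_k$ be the nonincreasing Lyndon factorization of $v$. If $k\ge 2$, then by Corollary~\ref{cor:l_1} and Corollary~\ref{cor:uuvv} one has $m_1^\omega\ge v^\omega>u^\omega$, hence $u<m_1$ by Theorem~\ref{theo:uvL}; your argument now shows $um_1$ is Lyndon, and $um_1$ is a proper prefix of $w$ longer than $u$, contradicting maximality. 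Therefore $k=1$ and $v$ is Lyndon.
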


\begin{corollary}
\label{cor:uvv1}
Let $u, v, v_1$ and $v_2$ as in Proposition~\ref{pro:lyndonleft}.
Then $v_1$ is a prefix of $u$.
\end{corollary}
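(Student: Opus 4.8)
The plan is to combine the lexicographic inequality $v_1 \le u$ supplied by Proposition~\ref{pro:lyndonleft} (in the case where $v$ is not a letter, so that $v = v_1 v_2$ is its left standard factorization) with the defining property of a Lyndon word applied to the proper suffix $v$ of $w$. First I would record the ingredients: $v_1$ is a prefix of $v$; $v$ is a proper suffix of the Lyndon word $w$, so $w < v$ (this is condition (ii) of Definition~\ref{def:Lyndon} applied to $w = uv$); and $v_1 \le u$ for the usual lexicographical order.

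Then I would argue by contradiction, assuming $v_1$ is \emph{not} a prefix of $u$. Let $p$ denote the longest common prefix of $u$ and $v_1$. Since $v_1$ is not a prefix of $u$, we cannot have $v_1 = u$, and we cannot have $v_1 = p$; also $u = p$ is impossible, for then $u$ would be a proper prefix of $v_1$, giving $u < v_1$ and contradicting $v_1 \le u$. Hence neither $u$ nor $v_1$ equals $p$, so we may write $v_1 = p\,a\,v_1'$ and $u = p\,b\,u'$ for some letters $a \ne b$ and (possibly empty) words $v_1', u'$; the inequality $v_1 < u$ then forces $a < b$.

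Finally I would compare $w$ and $v$ letter by letter. The word $p\,b$ is a prefix of $u$, hence a prefix of $w = uv$; and $p\,a$ is a prefix of $v_1$, hence a prefix of $v$. Since $w$ and $v$ agree on $p$ and then carry $b$ and $a$ respectively with $a < b$, we get $v < w$, contradicting $w < v$. Therefore $v_1$ is a prefix of $u$.

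The only slightly delicate point is the bookkeeping in the second step: one has to check that the failure of ``$v_1$ is a prefix of $u$'', together with $v_1 \le u$, genuinely produces a letter mismatch $a<b$ at the end of a common prefix rather than one word being a prefix of the other (in particular ruling out $|v_1| > |u|$ with $u$ a prefix of $v_1$, which the inequality $v_1 \le u$ forbids). Once this is settled, the contradiction with the Lyndon property of $w$ is immediate, so I do not expect any substantial obstacle.
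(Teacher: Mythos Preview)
Your proof is correct. The paper's argument has the same underlying shape but is a touch more direct: instead of invoking the Lyndon inequality $w<v$, it simply uses $u<v$, which is already part of Proposition~\ref{pro:lyndonleft}. One then has the chain $v_1\le u<v_1v_2$, and the classical fact about lexicographic order (if $x\le y<xz$ then $x$ is a prefix of $y$) immediately gives the conclusion. Your contradiction argument unfolds exactly this fact by hand, only comparing the mismatch letter against $w=uv$ rather than against $u$ itself; since $u$ is a prefix of $w$, the two comparisons are equivalent, so nothing is lost---but you could shorten the write-up by contradicting $u<v$ directly instead of passing through $w<v$.
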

\begin{proof}
By Proposition~\ref{pro:lyndonleft} we have $v_1 \le u < v_1 v_2$.
By a classical property of lexicographical order, this implies that $u = v_1 m$, for a certain word $m$ such that $m < v_2$.
Thus $v_1$ is a prefix of $u$.
\end{proof}

\begin{example}
\label{ex:leftstd}
Let us consider the Lyndon word $w = aabaacab$ on the alphabet $\{ a,b,c\}$ with $a < b < c$.
Its left standard factorization is $w = (aabaac)(ab)$.
The suffix $v=ab$ is a Lyndon word with left standard factorization $(a)(b)$ and one has $a \le aabaac$.
\end{example}

The {\em free magma} $M(A)$ over $A$ is the set of {\em complete trees} over $A$ defined recursively as follows:
\begin{itemize}
	\item[-] each letter is a tree;
	\item[-] if $\mathfrak{t}_1, \mathfrak{t}_2$ are trees, then $(\mathfrak{t}_1, \mathfrak{t}_2)$ is a tree.
\end{itemize}

We will use the classical notions of {\em root}, {\em internal node} and {\em leaf} for a tree.

There is a canonical surjective mapping $\varphi$ from $M(A)$ onto $A^+$ defined as follows:
given a tree $\mathfrak{t}$, its image $\varphi(\mathfrak{t})$, called its {\em foliage}, is defined recursively by:
\begin{itemize}
	\item $\varphi(a) = a$ for any $a$ in $A$;
	\item $\varphi((\mathfrak{t}_1, \mathfrak{t}_2))=\varphi(\mathfrak{t}_1) \varphi(\mathfrak{t}_2)$ for any two trees $\mathfrak{t}_1, \mathfrak{t}_2$.
\end{itemize}

In other words, $\varphi(\mathfrak{t})$ is obtained by vertically projecting the leaves of $\mathfrak{t}$ onto a horizontal line.

\begin{example}
The foliage of the tree in Figure~\ref{fig:foliage} is the word $aabaacab$.
\end{example}

\begin{figure}
	\centering
	\begin{tikzpicture}[
	scale=0.77, transform shape,
	every node/.style={anchor=south},
	level distance=15mm,
	level 1/.style={sibling distance=80mm, font=\footnotesize},
	level 2/.style={sibling distance=40mm, font=\footnotesize},
	level 3/.style={sibling distance=20mm,font=\footnotesize},
	level 4/.style={sibling distance=12mm,font=\footnotesize}
	]
	\node[font=\footnotesize] (ac) {$$}
	child {
		node {$x_1$}
		child {
			node{$$}
			child {
				node {$a$}
			}
			child {
				node{}
				child {
					node{$a$}
				}
				child {
					node{$b$}
				}
			}
		}
		child {
			node {$x_2$}
			child {
				node{$a$}
			}
			child {
				node{$x_3$}
				child {
					node{$a$}
				}
				child {
					node{$c$}
				}
			}
		}
	}
	child {
		node {$$}
		child {
			node{$a$}
		}
		child {
			node {$b$}
		}
	}
	;
	\end{tikzpicture}
	\caption{The left Lyndon tree of $w=aabaacab$.}
	\label{fig:foliage}
\end{figure}
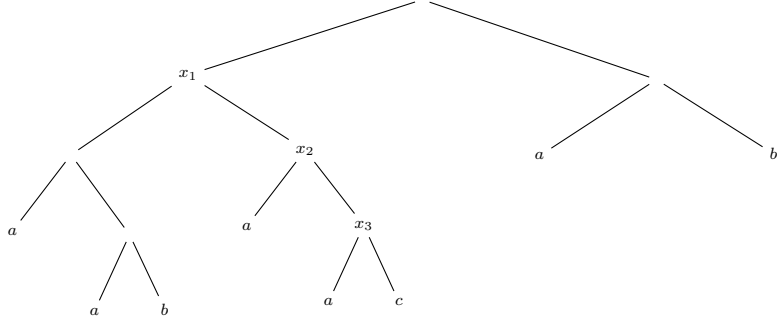

To each Lyndon word in $A^+$ we can associate a complete tree $\lst{w}$ in $M(A)$, called the
\emph{left Lyndon tree}
of $w$, defined recursively as follows:
\begin{itemize}
	\item $\lst{a} = a$ for each letter $a \in A$;
	\item $\lst{w} =(\lst{u}, \lst{v})$ for each Lyndon word $w$ of length at least 2 with left standard factorization $w=uv$.
\end{itemize}

It is clear by the definition that $\varphi(\lst{w})=w$.

\begin{example}
The left Lyndon tree associated to the Lyndon word $w=aabaacab$ is shown in Figure~\ref{fig:foliage} (disregarding the labels of the internal nodes).
\end{example}

\subsection{Prefix standardization}
\label{sec:standardization}

Consider an alphabet $A$ and a total order $<$ on $A$.
We define an order $\prec$ on the free monoid as follows: $u \prec v$ if either
\begin{itemize}
    \item $u^\omega < v^\omega$, or
    \item $u^\omega = v^\omega$ (which means that $u,v$ are power of the same word) and if $u$ is longer than $v$.
\end{itemize}

\begin{example}
	Let us consider the above order on $\{ a,b \}^*$ induced by $a<b$ .
	One has  $aa \prec a \prec ab \prec ba \prec b$.
\end{example}

Let $w$ be a word of length $n+1$ on a totally ordered alphabet.
Let us consider the sequence $p_1, p_2, \ldots, p_n, p_{n+1}=w$ of its nonempty prefixes, in increasing length.

Motivated by Ufnarovskij's Theorem (Corollary~\ref{cor:U}), we call {\em prefix standard permutation} of $w$ the unique permutation $\sigma\in \mathfrak{S}_{n+1}$ such that
$$ p_{\sigma^{-1}(1)} \prec p_{\sigma^{-1}(2)} \prec \ldots \prec p_{\sigma^{-1}(n)} \prec p_{\sigma^{-1}(n+1)}. $$
In other words, we number each letter in $w$ by $1, 2, \ldots, n, n+1$ as follows: 1 for the letter where the $\prec$-smallest prefix ends, 2 for the second smallest one in the order $\prec$, and so on.
Then $\sigma$ is the represented by the word $w$ translated in this new alphabet.
We denote such a permutation by $\pst(w)$.

\begin{example}
	The prefix standard permutation of the word $w=aabaacab$ is the permutation $\pst(w) = 21543768$.
	Indeed on can check that $aa \prec a \prec aabaa \prec aaba \prec aab \prec aabaaca \prec aabaac \prec w$.
\end{example}

\begin{remark}
In~\cite{HR} the authors define the "suffix standard permutation" of a word.
In literature, this is also commonly known as "inverse suffix array".
Therefore, our prefix standard permutation could also be called "inverse prefix array".
\end{remark}

An {\em injective word} is a word without repetition of letters.
Each permutation in $\mathfrak{S}_n$ can be seen as a complete injective word, that is an injective word in the ordered alphabet $\{1 < 2 < \ldots < n \}$, having all $n$ letters as factors.

With each injective word $\alpha$ on a totally ordered alphabet, and in particular with each permutation $\alpha \in \mathfrak{S}_n$, we can associate bijectively a binary noncomplete labeled {\em decreasing tree} (i.e., such that each node is smaller than its father) defined recursively as follows:
\begin{itemize}
    \item the root is $n$, the maximum letter in $\alpha$;
	\item its left and right subtrees (which may be empty) are the trees associated to $u$ and $v$ respectively, where $u,v$ are injective words in $\{ 1,2, \ldots, n \}$ such that $\alpha = unv$.
\end{itemize}

The inverse bijection is obtained by vertically projecting the labels on the horizontal line.

\begin{example}
	The decreasing tree associated to $\alpha = 2154376$ is the tree shown in Figure~\ref{fig:decreasing} restricted to its internal nodes.
\end{example}

\begin{figure}
	\centering
	\begin{tikzpicture}[
	scale=0.77, transform shape,
	every node/.style={anchor=south},
	level distance=15mm,
	level 1/.style={sibling distance=80mm, font=\footnotesize},
	level 2/.style={sibling distance=40mm, font=\footnotesize},
	level 3/.style={sibling distance=20mm,font=\footnotesize},
	level 4/.style={sibling distance=12mm,font=\footnotesize}
	]
	\node[font=\footnotesize] (ac) {$7$}
	child {
		node {$5$}
		child {
			node{$2$}
			child {
				node {$a$}
			}
			child {
				node{1}
				child {
					node{$a$}
				}
				child {
					node{$b$}
				}
			}
		}
		child {
			node {$4$}
			child {
				node{$a$}
			}
			child {
				node{$3$}
				child {
					node{$a$}
				}
				child {
					node{$c$}
				}
			}
		}
	}
	child {
		node {$6$}
		child {
			node{$a$}
		}
		child {
			node {$b$}
		}
	}
	;
	\end{tikzpicture}
	\caption{The tree $\tl{aabaacab}$.}
	\label{fig:decreasing}
\end{figure}

Let us now associate a tree with a Lyndon word, called the \emph{left Cartesian tree} of $w$, and denoted $\tl{w}$, as follows (see also \cite{CR} and the references there).
To each letter $a \in A$ we define $\tl{a} = a$.
Otherwise, let $\sigma = \pst(w) \in \mathfrak{S}_{n+1}$, with $|w|=n+1$, and let $\alpha \in \mathfrak{S}_n$ be the permutation obtained by removing the last digit in $\sigma$ (which is $n+1$, because of Corollary~\ref{cor:U}).
Using the construction before we obtain a noncomplete binary decreasing tree $\mathfrak{t}^*$ having $n$ nodes.
We define $\tl{w}$ as the complete binary tree having $\mathfrak{t}^*$ as tree of internal nodes and the letters of $w$ as leaves in order from left to right, i.e., such that $\varphi(\mathfrak{t}) = w$.

\begin{example}
The tree $\tl{w}$ for $w = aabaacab$ is shown in Figure~\ref{fig:decreasing}.
\end{example}

For our purpose, we give an alternative construction of $\tl{w}$. Consider the sequence of proper prefixes $(p_1, p_2, \ldots, p_n)$ of $w$, viewed as a word of length $n$ on the alphabet $A^*$, totally ordered by $\prec$.
Since it is an injective word, we can associate with it a decreasing tree as before; call it $t^*(w)$.
Also as before, we can complete $t^*(w)$ to a tree $t(w)$, having the letters of $w$ as leaves in ordrer from left to right, i.e., such that $\varphi(t(w)) = w$.
The completed tree $t(w)$, disregarding the labels of the internal nodes, coincides with $\tl{w}$.
Indeed, the word $\alpha$ coincides with the previous injective word, up to the unique increasing order isomorphism from $\{ 1, \ldots, n \}$ into the set of proper prefixes of $w$, sending $i$ to the $i$-th prefix for the order $\prec$.

\subsection{Equivalence of the trees}

From the constructions seen before it is clear that $\varphi(\lst{w}) = \varphi(\tl{w})$ for every Lyndon word $w$.
We actually have a stronger result.

\begin{theorem}
\label{theo:equal}
Let $w$ be a Lyndon word.
The trees $\lst{w}$ and $\tl{w}$ are equal.
\end{theorem}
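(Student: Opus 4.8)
We prove $\lst{w}=\tl{w}$ as complete binary trees over $A$ (ignoring the internal‑node labels that $\tl{w}$ carries) by induction on $|w|$, working with the description of $\tl{w}$ as the completion of the decreasing tree $t^*(w)$ of the injective word $(p_1,\dots,p_n)$ of proper prefixes of $w$ ordered by $\prec$. If $|w|=1$ there is nothing to do; so let $|w|\ge2$ and let $w=uv$ be the left standard factorization, so that $u,v$ are Lyndon, $u<v$, and (unless $v$ is a letter) the left Lyndon factor $v_1$ of $v$ is a prefix of $u$ (Proposition~\ref{pro:lyndonleft}, Corollary~\ref{cor:uvv1}). Writing $v'_j$ for the length‑$j$ prefix of $v$, the proper prefixes of $w$ in increasing length are $p_1,\dots,p_{|u|}=u,\ uv'_1,\dots,uv'_{|v|-1}$, of which $(p_1,\dots,p_{|u|-1})$ is precisely the list of proper prefixes of $u$ in the same $\prec$‑order. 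Hence the statement follows from two claims: (a) the $\prec$‑largest proper prefix of a Lyndon word is its left Lyndon factor; (b) the map $v'\mapsto uv'$ preserves $\prec$ on the proper prefixes of $v$. Granting these, (a) puts the root of $t^*(w)$ at position $|u|$, its left subtree is $t^*(u)$ and, by (b), its right subtree has the shape of $t^*(v)$; completing with the letters of $w$ and invoking $\lst{u}=\tl{u}$, $\lst{v}=\tl{v}$ gives $\tl{w}=(\tl{u},\tl{v})=(\lst{u},\lst{v})=\lst{w}$.

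Claim (a) I would get from a sublemma proved by induction on length: if $\ell,p$ are Lyndon words and, when $|\ell|\ge2$ with left standard factorization $\ell=\ell_1\ell_2$, $\ell_1$ is a prefix of $p$, then $q^\omega\le p^\omega$ for every proper prefix $q$ of $\ell$. The underlying micro‑fact is that a prefix of a Lyndon word $z$ always has $\omega$‑power $\le z^\omega$ (Ufnarovskij's theorem, Corollary~\ref{cor:U}); for the inductive step with $|q|>|\ell_1|$ one writes $q=\ell_1q'$ with $q'$ a proper prefix of $\ell_2$, applies the sublemma to $(\ell_2,\ell_1)$ — legitimate by Corollary~\ref{cor:uvv1} — to get $(q')^\omega\le\ell_1^\omega$, and concludes $(\ell_1q')^\omega\le\ell_1^\omega\le p^\omega$ via Corollary~\ref{cor:uuvv}. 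Applying the sublemma to $(v,u)$ shows $(v')^\omega\le u^\omega$ for every proper prefix $v'$ of $v$; then (a) is immediate, since a proper prefix of $w$ shorter than $u$ is $\prec u$ by Corollary~\ref{cor:U}, while a longer one $uv'$ has $(uv')^\omega\le u^\omega$ by Corollary~\ref{cor:uuvv} and hence $uv'\prec u$ as $|uv'|>|u|$.

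For (b), write $v''=v'm$ with $v'$ a proper prefix of $v''$ and $m\ne1$. Peeling off the common prefix $v'$, resp.\ $uv'$, and using Corollaries~\ref{cor:uuvv} and~\ref{cor:uvvu}, the equivalence in (b) reduces to the statement that $(v')^\omega$ and $(uv')^\omega$ lie on the same side of $m^\omega$ (equalities being absorbed by the length tie‑breaker of $\prec$, since $|v'|<|v''|$ and $|uv'|<|uv''|$). By the sublemma both $(v')^\omega$ and $(uv')^\omega$ are $\le u^\omega$, so the case $m^\omega>u^\omega$ is trivial. The case $m^\omega\le u^\omega$ is the main obstacle: here $m$ is an interior factor of $v$ (one has $v=v'mx$ with $x\ne1$), and I would treat it by a case analysis on the position of $v''$ relative to $v_1$ — when $|v''|\le|v_1|$ the factor $m$ lies inside $u$ itself, and when $|v''|>|v_1|$ one writes $v'=v_1\bar v'$, $v''=v_1\bar v''$ with $\bar v',\bar v''$ proper prefixes of $v_2$ and reduces to the same claim for the shorter Lyndon word $v$ — invoking the unborderedness of Lyndon words together with Corollary~\ref{cor:suffix} and Lemma~\ref{lem:p<s}. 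Making this last case airtight is the one genuinely nonroutine step; everything else is a direct consequence of Sections~\ref{sec:infinite} and~\ref{sec:Lyndon}. (One can also sidestep (b): along the recursion producing $t^*(w)$, the right‑hand injective word splits at each step exactly like the injective word of the corresponding prefix‑word of $v$, so that only a relativized version of the maximum property (a) is needed — with the same delicate case reappearing there.)
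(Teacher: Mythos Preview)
Your approach is genuinely different from the paper's, and it carries a real gap that you yourself flag.

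\textbf{The paper's route.} Rather than induction on $|w|$, the paper labels each internal node $x$ of $\lst{w}$ by its \emph{left foliage} $g_{\mathfrak t}(x)$, the product of the foliages of the subtrees hanging at the left of the root-to-$x$ path (Lemma~\ref{lem:lssg}). These labels are exactly the proper prefixes $p_1,\dots,p_n$ of $w$, appearing left to right in length order. The heart of the proof is then to check that this labeling is $\prec$-decreasing along every edge. Lemma~\ref{lem:lst1} shows that the foliages $\ell_1,\dots,\ell_k$ along any such path form a prefix chain of Lyndon words ($\ell_{i+1}$ a prefix of $\ell_i$), and Lemma~\ref{lem:lst2} then gives the two needed parent--child inequalities: $(\ell_1\cdots\ell_{k-1}\ell_k')^\omega<(\ell_1\cdots\ell_k)^\omega$ for a left child and $(\ell_1\cdots\ell_k)^\omega\le(\ell_1\cdots\ell_{k-1})^\omega$ for a right child. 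Uniqueness of the decreasing tree of an injective word finishes the argument. The point is that only \emph{adjacent} prefixes in the tree ever need to be compared, and for those the prefix-chain hypothesis makes the comparison short.

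\textbf{Your route, and its gap.} Your induction hinges on claim~(b), that $v'\mapsto uv'$ preserves $\prec$ on \emph{all pairs} of proper prefixes of $v$. Reducing as you do, this amounts to ruling out the configuration $(v')^\omega<m^\omega\le(uv')^\omega\le u^\omega$ for $v''=v'm$; since one always has $(v')^\omega\le(uv')^\omega$, the other direction is free. You correctly isolate the problematic case $m^\omega\le u^\omega$, but the sketched case split on $|v''|$ versus $|v_1|$ plus unborderedness and Lemma~\ref{lem:p<s} does not, as written, close it: in the subcase $|v'|<|v_1|\le|v''|$ one has $m$ straddling $v_1$, and neither Corollary~\ref{cor:suffix} nor Lemma~\ref{lem:p<s} gives control of $m^\omega$ relative to $(uv')^\omega$. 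The theorem being true, claim~(b) is true, but a clean proof of it essentially recreates the paper's Lemma~\ref{lem:lst2} (your ``sublemma'' is the $n=2$ shadow of the paper's Lemma~\ref{lem:lst1}/\ref{lem:lst2} machinery). The paper's direct labeling avoids the global order-preservation statement altogether, which is what makes its argument shorter.
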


To prove the theorem we need some intermediate result.
Let $x$ be an internal node of some planar binary complete tree $\mathfrak{t} = ( \mathfrak{t}_1 ,\mathfrak{t}_2)$.
We call {\em left subtrees sequence} with respect to the tree $\mathfrak{t}$ and the node $x$, denoted $\lss{\mathfrak{t}}{x}$, the sequence of subtrees of $\mathfrak{t}$ hanging at the left of the path from the root to $x$, that is the sequence of subtrees of $\mathfrak{t}$ recursively defined as follows:
\begin{itemize}
    \item if $x$ is the root, then $\lss{\mathfrak{t}}{x} = (\mathfrak{t}_1)$;
    \item if $x$ is in $\mathfrak{t}_1$, then $\lss{\mathfrak{t}}{x} = \lss{\mathfrak{t}_1}{x}$;
    \item if $x$ is in $\mathfrak{t}_2$, then $\lss{\mathfrak{t}}{x} = (\mathfrak{t}_1, \lss{\mathfrak{t}_2}{x})$.
\end{itemize}

\begin{lemma}
\label{lem:lst1}
Let $w$ be a Lyndon word and let $x$ be an internal node of $\lst{w}$.
Let $\lss{\mathfrak{t}}{x} = (\mathfrak{t}_1, \ldots, \mathfrak{t}_n)$ and let $\ell_i$ be the foliage of the tree $\mathfrak{t}_i$ for each $i$.
Then all $\ell_i$ are Lyndon words.
Moreover, $\ell_{i+1}$ is a prefix of $\ell_i$ for each $1 \le i \le n-1$.
\end{lemma}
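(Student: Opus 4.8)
The plan is to proceed by induction on the structure of the tree $\lst{w}$, exploiting the recursive definition of $\lss{\mathfrak{t}}{x}$ together with Proposition \ref{pro:lyndonleft} and Corollary \ref{cor:uvv1}. Write $w = uv$ for the left standard factorization, so $\lst{w} = (\lst{u}, \lst{v})$ with $u, v$ Lyndon words and $u < v$. The three cases in the definition of the left subtrees sequence correspond exactly to the base case and the two inductive cases. If $x$ is the root, then $\lss{\lst{w}}{x} = (\lst{u})$, so $n = 1$, $\ell_1 = u$ is a Lyndon word, and the "prefix" condition is vacuous. If $x$ lies in $\lst{u}$, then $\lss{\lst{w}}{x} = \lss{\lst{u}}{x}$, and the statement follows directly from the induction hypothesis applied to the Lyndon word $u$.

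The substantive case is when $x$ lies in $\lst{v}$: here $\lss{\lst{w}}{x} = (\lst{u}, \lss{\lst{v}}{x})$, i.e., prepending $\mathfrak{t}_1 = \lst{u}$ to the sequence $\lss{\lst{v}}{x} = (\mathfrak{t}_2, \ldots, \mathfrak{t}_n)$. By the induction hypothesis applied to the Lyndon word $v$, the foliages $\ell_2, \ldots, \ell_n$ of $\mathfrak{t}_2, \ldots, \mathfrak{t}_n$ are all Lyndon words with $\ell_{i+1}$ a prefix of $\ell_i$ for $2 \le i \le n-1$. The foliage $\ell_1$ of $\lst{u}$ is $u$, which is a Lyndon word by Proposition \ref{pro:lyndonleft}. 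So the only thing left to prove is that $\ell_2$ is a prefix of $\ell_1 = u$. Now $\ell_2$ is the foliage of the first tree in $\lss{\lst{v}}{x}$; unwinding the base case of the recursion, $\mathfrak{t}_2$ is the left subtree of $\lst{v}$, whose foliage is precisely $v_1$, the first factor of the left standard factorization $v = v_1 v_2$ of $v$ (when $x$ is not the root of $\lst{v}$; if $x$ is the root of $\lst{v}$ then $n = 2$ and $\ell_2 = \varphi(\lst{v}_1) $ is again the left factor). By Corollary \ref{cor:uvv1}, $v_1$ is a prefix of $u$, which is exactly what we need. (The edge case where $v$ is a single letter cannot occur when $x$ is an internal node of $\lst{v}$, since then $\lst{v}$ has no internal nodes.)

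The main obstacle — though a mild one — is bookkeeping: one must carefully track that "the first tree of $\lss{\lst{v}}{x}$" really does have foliage equal to the left factor $v_1$ of $v$'s left standard factorization, across all three sub-cases of how $x$ sits inside $\lst{v}$, and that the prefix chain $\ell_{i+1} \preceq_{\text{pref}} \ell_i$ is not broken at the junction $i = 1$ between the newly prepended $\lst{u}$ and the inductively handled tail. Both reduce to a single appeal to Corollary \ref{cor:uvv1}, so once the recursion is unwound correctly there is no real difficulty. I would present the argument as a clean structural induction, handling the root case and the "$x \in \lst{u}$" case in one line each, and devoting the bulk of the write-up to verifying the junction condition in the "$x \in \lst{v}$" case.
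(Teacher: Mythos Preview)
Your approach is the same as the paper's --- a structural induction with the only substantive case being $x$ in $\lst{v}$, reduced via Corollary~\ref{cor:uvv1} to showing that $\ell_2$ is a prefix of $u$. However, there is a small but real error in your handling of that case.

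You assert that ``$\mathfrak{t}_2$ is the left subtree of $\lst{v}$, whose foliage is precisely $v_1$,'' and later that this holds ``across all three sub-cases of how $x$ sits inside $\lst{v}$.'' This is false. Write $\lst{v} = (\lst{v_1}, \lst{v_2})$. If $x$ is the root of $\lst{v}$, or if $x$ lies in $\lst{v_2}$, then indeed the first tree of $\lss{\lst{v}}{x}$ is $\lst{v_1}$ and $\ell_2 = v_1$. But if $x$ lies in $\lst{v_1}$, then by definition $\lss{\lst{v}}{x} = \lss{\lst{v_1}}{x}$, and the first tree in this sequence is a \emph{proper} subtree of $\lst{v_1}$; its foliage $\ell_2$ is then a proper prefix of $v_1$, not $v_1$ itself. (Concretely: take $v = aabab$, so $v_1 = aab$, $\lst{v_1} = (a,(a,b))$; if $x$ is the root of the inner $(a,b)$ then $\lss{\lst{v}}{x} = (a,a)$ and $\ell_2 = a \neq aab$.)

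The fix is immediate and is exactly what the paper does: in that third sub-case one observes that $\ell_2$ is a prefix of $v_1$ (since the foliage of the first tree in $\lss{\lst{v_1}}{x}$ is always a prefix of $\varphi(\lst{v_1}) = v_1$), and since $v_1$ is a prefix of $u$ by Corollary~\ref{cor:uvv1}, transitivity gives $\ell_2$ a prefix of $u = \ell_1$. So your plan is sound; just correct the claim about $\ell_2$ in that sub-case.
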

\begin{proof}
The fact that each $\ell_i$ is a Lyndon word follows from the more general fact that the foliage of each subtree of $\lst{w}$ is a Lyndon word, as follows recursively from the construction of $\lst{w}$.

We prove the other assertion by induction on the size of the tree.
If $x$ is the root, then $n=1$ and there is nothing to prove.
	
Let us suppose that $x$ is not the root of the tree and let us write $\lst{w} = (\mathfrak{t}',\mathfrak{t}'')$.
	
If $x$ is an internal node of $\mathfrak{t}'$, then $\lss{\mathfrak{t}}{x} = \lss{\mathfrak{t}'}{x}$, and we can conclude by induction.

Suppose now that $x$ is an internal node of $\mathfrak{t}''$, and let $\mathfrak{t}'' = (\mathfrak{s}_1, \mathfrak{s}_2)$.
Thus $\lss{\mathfrak{t}}{x} = (\mathfrak{t}_1, \mathfrak{t}_2, \ldots, \mathfrak{t_n})$, with $\mathfrak{t}_1 = \mathfrak{t}'$ and $(\mathfrak{t}_2, \ldots \mathfrak{t}_n) = \lss{\mathfrak{t}''}{x}$ (note that $n \ge 2)$.
By induction it is enough to show that $\ell_2$ is a prefix of $\ell_1$.
Let $v_1, v_2$ be respectively the foliages of $\mathfrak{s}_1, \mathfrak{s}_2$.
Then by the property of the left standard factorization and by the construction of the tree $\lst{w}$, we have that $v_1$ is a prefix of $\ell_1$.
Now either $x$ is in $\mathfrak{s}_2$ and $\mathfrak{t}_2 = \mathfrak{s}_1$ and $\ell_2 = v_1$
or $x$ is in $\mathfrak{s}_1$ and $\ell_2$ is a proper prefix of $v_1$
In both cases $\ell_2$ is a prefix of $\ell_1$.
\end{proof}

\begin{example}
Let us consider the tree $\mathfrak{t}$ in Figure~\ref{fig:foliage} and its internal node $x_3$.
The left subtrees sequence $\lss{\mathfrak{t}}{x_3}$ is equal to $(\mathfrak{t}_1, \mathfrak{t}_2, \mathfrak{t}_3)$, where $\mathfrak{t}_i$ is the subtree of $\mathfrak{t}$ having root $x_i$.
The foliages of the three subtrees are respectively the words $\ell_1 = aab$, $\ell_2 = a$ and $\ell_3 = a$.
Each of them is a prefix of the previous one.
\end{example}

\begin{lemma}
\label{lem:lst2}
Let $\ell_1, \ldots, \ell_n$, be Lyndon words such that $\ell_{i+1}$ is a prefix of $\ell_i$ for each $i = 1, \ldots, n-1$.
Let $\ell_n = \ell_n' \ell_n''$ be the left standard factorization of $\ell_n$.
Then
\begin{enumerate}[(i)]
    \item $(\ell_1 \cdots \ell_{n-1} \ell_n')^\omega ~ < ~ (\ell_1 \cdots \ell_n)^\omega$;
	\item moreover, if $n \ge 2$, one has $(\ell_1 \cdots \ell_n)^\omega ~ \leq ~ (\ell_1 \cdots \ell_{n-1})^\omega$.
\end{enumerate}
\end{lemma}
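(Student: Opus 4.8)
The plan is to handle the two parts separately, reducing each (via Corollary~\ref{cor:uuvv}) to a comparison between $\ell_n^\omega$ or $(\ell_n'')^\omega$ and the $\omega$-power of a prefix of the word in question, and then to exploit two structural facts. First, since $\ell_{i+1}$ is a prefix of $\ell_i$ for every $i$, Theorem~\ref{theo:uvL} gives $\ell_1 \ge \ell_2 \ge \cdots \ge \ell_n$ lexicographically (equivalently $\ell_1^\omega \ge \cdots \ge \ell_n^\omega$), so each $\ell_1 \cdots \ell_k$ with $k \le n$ is a nonincreasing product of Lyndon words and hence, by uniqueness (Theorem~\ref{theo:factorization}), is its own Lyndon factorization. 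Second, $\ell_n'$ is a proper prefix of $\ell_n$, hence of every $\ell_i$; in particular $\ell_n' < \ell_{n-1}$, so $\ell_1 \cdots \ell_{n-1}\ell_n'$ is again a nonincreasing product of Lyndon words, with first factor $\ell_1$ when $n \ge 2$. Throughout I will freely use that $\ell_n'$ and $\ell_n''$ are Lyndon with $(\ell_n')^\omega < (\ell_n'')^\omega$ and $\ell_n^\omega < (\ell_n'')^\omega$ (Proposition~\ref{pro:lyndonleft} and Corollary~\ref{cor:suffix}), and that Lyndon words are primitive and unbordered.

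For part~(i), I would set $M' = \ell_1 \cdots \ell_{n-1}\ell_n'$, so that $\ell_1 \cdots \ell_n = M'\ell_n''$; by Corollary~\ref{cor:uuvv} (conditions (1) and (5) with $u = M'$, $v = \ell_n''$), the claim $(M')^\omega < (M'\ell_n'')^\omega$ is equivalent to $(M')^\omega < (\ell_n'')^\omega$. For $n = 1$ this is just $(\ell_1')^\omega < (\ell_1'')^\omega$, i.e.\ condition~1 of Corollary~\ref{cor:suffix} for the factorization $\ell_1 = \ell_1'\ell_1''$. For $n \ge 2$, since $\ell_1$ is the first factor of the Lyndon factorization of $M'$, part~2 of Theorem~\ref{theo:shortest} gives $(M')^\omega \le \ell_1^\omega$, and the task reduces to proving $\ell_1^\omega < (\ell_n'')^\omega$.

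I expect this last inequality to be the main obstacle. Since $\ell_n$ is a prefix of $\ell_1$, write $\ell_1 = \ell_n y = \ell_n'\ell_n'' y$ for a finite (possibly empty) word $y$. The crucial point will be that $\ell_n''$ is not a fractional power of $\ell_n'$: otherwise $\ell_n''$, and hence $\ell_n = \ell_n'\ell_n''$, would be a prefix of $(\ell_n')^\omega$, and writing $\ell_n = (\ell_n')^j t$ with $j \ge 1$ and $t$ a proper prefix of $\ell_n'$ would make $t$ both a prefix and a suffix of $\ell_n$, forcing $t = 1$ (unborderedness) and so $\ell_n = (\ell_n')^j$ with $j \ge 2$ (as $\ell_n' \ne \ell_n$), contradicting primitivity of the Lyndon word $\ell_n$. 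Given this, Lemma~\ref{lem:p<s}(ii) applies with $u = \ell_n'$, $v = \ell_n''$ (recall $(\ell_n')^\omega < (\ell_n'')^\omega$): if $k$ is the largest integer with $(\ell_n')^k$ a prefix of $\ell_n''$, say $\ell_n'' = (\ell_n')^k s$, then $\ell_1 = \ell_n'\ell_n'' y = (\ell_n')^{k+1}sy$, and the lemma (taking its two free words to be $sy$ and $1$) yields $\ell_1^\omega < (\ell_n'')^\omega$, as needed. Apart from the fractional-power check, this is routine bookkeeping with Corollary~\ref{cor:uuvv}.

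For part~(ii) (where $n \ge 2$), set $N = \ell_1 \cdots \ell_{n-1}$; the goal is $(N\ell_n)^\omega \le N^\omega$. By Corollary~\ref{cor:uuvv} (conditions (1) and (6) with $u = \ell_n$, $v = N$), together with the fact that $(N\ell_n)^\omega = N^\omega$ exactly when $\ell_n$ and $N$ are powers of a common word, this is equivalent to $\ell_n^\omega \le N^\omega$. Now $N = \ell_1 \cdots \ell_{n-1}$ has Lyndon factorization $\ell_1 \ge \cdots \ge \ell_{n-1}$, so by Theorem~\ref{theo:factorization2} its last factor $\ell_{n-1}$ minimizes $s^\omega$ over all nonempty suffixes $s$ of $N$; taking $s = N$ gives $\ell_{n-1}^\omega \le N^\omega$, whence $\ell_n^\omega \le \ell_{n-1}^\omega \le N^\omega$ since $\ell_n$ is a prefix of $\ell_{n-1}$.
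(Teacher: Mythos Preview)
Your argument is correct, but it takes a different route from the paper's.  The paper handles both items by a direct prefix-cancellation trick: for~(i) it strips the common prefix $\ell_1\cdots\ell_{n-1}\ell_n'$ from both infinite words and observes that what remains is a comparison of the form $\ell_n s_0$ versus $\ell_n'' t_0$, which follows at once from $\ell_n<\ell_n''$ and the fact that $\ell_n$ is not a prefix of $\ell_n''$; for~(ii) it strips $\ell_1\cdots\ell_{n-1}$, writes $\ell_1=\ell_n v$, and reduces to $\ell_1 s_0<v t_0$, again an immediate consequence of Corollary~\ref{cor:suffix}.  No appeal to Theorems~\ref{theo:shortest} or~\ref{theo:factorization2}, no fractional-power check, and no use of Lemma~\ref{lem:p<s} is needed.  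Your route instead passes through Corollary~\ref{cor:uuvv} to isolate $(\ell_n'')^\omega$ and $\ell_n^\omega$, and then exploits the paper's earlier structural results on the first and last Lyndon factors to bound $(M')^\omega$ by $\ell_1^\omega$ and $N^\omega$ by $\ell_{n-1}^\omega$; the remaining gap $\ell_1^\omega<(\ell_n'')^\omega$ forces you into the unbordered/primitive argument and Lemma~\ref{lem:p<s}.  Both approaches are sound; the paper's is shorter and more self-contained, while yours illustrates how Theorems~\ref{theo:shortest} and~\ref{theo:factorization2} can be put to work, at the cost of the extra fractional-power verification.
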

\begin{proof}
If $n=1$, we have $\ell_1'^\omega < \ell_1''^\omega$ by Corollary~\ref{cor:suffix} and $\ell_1'^\omega < (\ell_1' \ell_1'')^\omega$ by Corollary~\ref{cor:uuvv}.
Hence $\ell_1'^\omega < \ell_1^\omega$.
	
Suppose now that $n \ge 2$.
Since both $(\ell_1 \cdots \ell_{n-1} \ell_n')^\omega$ and $(\ell_1 \cdots \ell_n)^\omega$ start with the word $\ell_1 \cdots \ell_{n-1} \ell_n'$, we have
$$
(\ell_1 \cdots \ell_{n-1} \ell_n')^\omega ~ < ~ (\ell_1 \cdots \ell_n)^\omega
\quad \Longleftrightarrow \quad
(\ell_1 \cdots \ell_{n-1} \ell_n')^\omega ~ < ~ \ell_n'' (\ell_1 \cdots \ell_n)^\omega.
$$
Since $\ell_n$ is a prefix of $\ell_1$, the last inequality is of the form $\ell_n s_0 < \ell_n'' t_0$, with $s_0, t_0 \in A^\omega$, and this is true since $\ell_n < \ell_n''$ by Corollary~\ref{cor:suffix} and because $\ell_n$ is not a prefix of $\ell_n''$.
	
Let us now prove point (ii).
If $\ell_1 = \ell_2 = \cdots = \ell_n$, we trivially have $(\ell_1 \cdots \ell_n)^\omega = (\ell_1 \cdots \ell_{n-1})^\omega$.
Thus, let us suppose that $\ell_1 \neq \ell_n$.
Since both terms of the inequality start with $\ell_1 \cdots \ell_{n-1}$, we have
$$
(\ell_1 \cdots \ell_n)^\omega ~ \le ~ (\ell_1 \cdots \ell_{n-1})^\omega
\quad \Longleftrightarrow \quad
\ell_n (\ell_1 \cdots \ell_n)^\omega ~ \le ~ (\ell_1 \cdots \ell_{n-1})^\omega.
$$
Since $\ell_n$ is a prefix of $\ell_1$, and $\ell_n \ne \ell_1$ we can write $\ell_1 = \ell_n v$ for a nonempty finite word $v$.
Thus, the last inequality is of the form $\ell_n \ell_1 s_0 \le \ell_n v t_0$ with $s_0, t_0 \in A^\omega$, and this is equivalent to $\ell_1 s_0 \le v t_0$, which is true because of Corollary~\ref{cor:suffix}.
\end{proof}

Before proving the main result let us introduce the following notation.
Let $\mathfrak{t} = (\mathfrak{t}_1, \mathfrak{t}_2)$ be a complete binary labeled tree.
To each internal node $x$ of $\mathfrak{t}$ we associate the word $g_\mathfrak{t}(x)$, called the {\em left foliage} of $x$ in $\mathfrak{t}$, as follows:
\begin{itemize}
	\item if $x$ is the root of $\mathfrak{t}$, then $g_\mathfrak{t}(x) = \varphi(\mathfrak{t}_1)$;
	\item if $x$ is in $\mathfrak{t}_1$, then $g_\mathfrak{t}(x) = g_{\mathfrak{t}_1}(x)$;
	\item if $x$ is in $\mathfrak{t}_2$, then $g_\mathfrak{t}(x) = \varphi(\mathfrak{t}_1) g_{\mathfrak{t}_2}(x)$.
\end{itemize}

For further use, we note that the length of $g_\mathfrak{t}(x)$ is equal to the number of leaves located at the left of $x$ in $\mathfrak{t}$.

Also, we use the following result.

\begin{lemma}
\label{lem:lssg}
    Let $\mathfrak{t}$ be a complete binary labeled tree and $x$ an internal note of $\mathfrak{t}$.
    Then $g_{\mathfrak{t}}(x)=\ell_1\cdots \ell_n$, where $\ell_1,\ldots, \ell_n$ are the foliages of the trees hanging at the left on the path from the root to $x$.
\end{lemma}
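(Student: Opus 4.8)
The plan is to proceed by induction on the number of internal nodes of $\mathfrak{t}$ (equivalently, on $|\varphi(\mathfrak{t})|$), exploiting the fact that the left foliage $g_{\mathfrak{t}}(x)$ and the left subtrees sequence $\lss{\mathfrak{t}}{x}$ are defined by parallel recursions over the very same case split on the position of the node $x$. Writing $\mathfrak{t} = (\mathfrak{t}_1, \mathfrak{t}_2)$, there are three cases to check, matching the three clauses in each of the two recursive definitions; the smallest instance (one internal node) falls under the first case below, so no separate base case is needed.

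First, if $x$ is the root of $\mathfrak{t}$, then $\lss{\mathfrak{t}}{x} = (\mathfrak{t}_1)$, so the concatenation of the foliages of the trees in this one-term sequence is $\varphi(\mathfrak{t}_1)$, which is exactly $g_{\mathfrak{t}}(x)$ by definition. Second, if $x$ lies in $\mathfrak{t}_1$, then $\lss{\mathfrak{t}}{x} = \lss{\mathfrak{t}_1}{x}$ and $g_{\mathfrak{t}}(x) = g_{\mathfrak{t}_1}(x)$; since $\mathfrak{t}_1$ has strictly fewer internal nodes, the induction hypothesis applied to $\mathfrak{t}_1$ gives that $g_{\mathfrak{t}_1}(x)$ is the concatenation of the foliages of the trees in $\lss{\mathfrak{t}_1}{x}$, and we are done. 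Third, if $x$ lies in $\mathfrak{t}_2$, then $\lss{\mathfrak{t}}{x} = (\mathfrak{t}_1, \lss{\mathfrak{t}_2}{x})$ while $g_{\mathfrak{t}}(x) = \varphi(\mathfrak{t}_1)\, g_{\mathfrak{t}_2}(x)$; applying the induction hypothesis to $\mathfrak{t}_2$, the word $g_{\mathfrak{t}_2}(x)$ equals the concatenation of the foliages of the trees in $\lss{\mathfrak{t}_2}{x}$, so prepending $\varphi(\mathfrak{t}_1)$ yields the concatenation of the foliages of the trees in $(\mathfrak{t}_1, \lss{\mathfrak{t}_2}{x})$, which is what we want.

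I do not expect any genuine obstacle: both sides satisfy the same recursion, so the argument is essentially bookkeeping, the only point requiring a little care being to keep the ordering of the subtrees in $\lss{\mathfrak{t}}{x}$ consistent with the left-to-right order of the concatenation (so that the $\varphi(\mathfrak{t}_1)$ contributed in the third case is prepended, not appended). As a sanity check, taking lengths on both sides of the identity recovers the remark made just before the lemma, namely that $|g_{\mathfrak{t}}(x)|$ is the number of leaves lying to the left of $x$ in $\mathfrak{t}$.
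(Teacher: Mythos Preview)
Your proof is correct and follows exactly the approach the paper has in mind: the paper's own proof simply states that the identity ``follows easily from the recursive definition of both $\lss{\cdot}{\cdot}$ and $g$,'' and your case split on the position of $x$ in $\mathfrak{t}=(\mathfrak{t}_1,\mathfrak{t}_2)$ is precisely the unwinding of those parallel recursions.
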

\begin{proof}
    Let $\lss{\mathfrak{t}}{x} = ( \mathfrak{t}_1, \ldots, \mathfrak{t}_n)$.
    We have to show that
    $g_\mathfrak{t}(x) = \varphi(\mathfrak{t}_1) \cdots \varphi(\mathfrak{t}_n)$.
    This follows easily from the recursive definition of both $lls$ and $g$.
\end{proof}

\begin{example}
The label shown in Figure~\ref{fig:variant} is obtained by relabeling the internal nodes of the tree $\lst{w}$ in Figure~\ref{fig:foliage}, using the left foliage function.
\end{example}

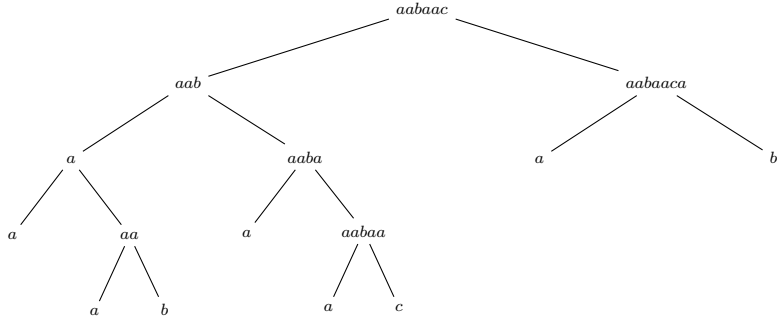
\begin{figure}
	\centering
	\begin{tikzpicture}[
	scale=0.77, transform shape,
	every node/.style={anchor=south},
	level distance=15mm,
	level 1/.style={sibling distance=80mm, font=\footnotesize},
	level 2/.style={sibling distance=40mm, font=\footnotesize},
	level 3/.style={sibling distance=20mm,font=\footnotesize},
	level 4/.style={sibling distance=12mm,font=\footnotesize}
	]
	\node[font=\footnotesize] (ac) {$aabaac$}
	child {
		node {$aab$}
		child {
			node{$a$}
			child {
				node {$a$}
			}
			child {node{$aa$}
				child {node{$a$}}
				child {node{$b$}}
			}
		}
		child {
			node {$aaba$}
			child {
				node{$a$}
			}
			child {
				node{$aabaa$}
				child {
					node{$a$}
				}
				child {
					node{$c$}
				}
			}
		}
	}
	child {
		node {$aabaaca$}
		child {
			node{$a$}
		}
		child {
			node {$b$}
		}
	};
	\end{tikzpicture}
	\caption{Variant of the left Lyndon tree of $w = aabaacab$ labeling the internal nodes with their left foliage.}
	\label{fig:variant}
\end{figure}

\begin{proof}[Proof of Theorem~\ref{theo:equal}]
Let us consider the two trees $\lst{w}$ and $\tl{w}$.
Note first that they have the same foliage.
So, in order to prove the equality, it is enough to show that the two trees obtained from $\lst{w}$ and $\tl{w}$ by removing the leaves (that is, considering only the internal nodes) are equal.

We actually show that by labeling the internal nodes of $\lst{w}$ using the function $g_\mathfrak{t}$, we obtain the same tree as $\tl{w}$.
For this it is enough to show that the labeling $g_\mathfrak{t}$ is decreasing, and that its projection is exactly the sequence $(p_1,\ldots,p_n)$ of nonempty prefixes of $w$.
Indeed, the decreasing tree associated with an injective word on a totally alphabet is unique.

Let us consider two internal nodes $x$ and $y$ of $\mathfrak{t}$, such that $y$ is a child of $x$.
We show that $g_\mathfrak{t}(y) \prec g_\mathfrak{t}(x)$, i.e., that either
\begin{itemize}
    \item $g_\mathfrak{t}(y)^\omega ~ < ~ g_\mathfrak{t}(x)^\omega$, or
    \item $g_\mathfrak{t}(y)^\omega = g_\mathfrak{t}(x)^\omega$ and 
	$|g_\mathfrak{t}(y)| ~ > ~ |g_\mathfrak{t}(x)|$.
\end{itemize}
Suppose first that $y$ is a right child of $x$ and denote by $h_1, \ldots, h_n$ the foliages of the subtrees in the sequence $\lss{\mathfrak{t}}{y}$.
We have $n \geq 2$, and by Lemma~\ref{lem:lssg} $g_\mathfrak{t}(x) = h_1 \cdots h_{n-1}$, and $g_\mathfrak{t}(y) = h_1 \cdots h_{n}$.
Thus $g_\mathfrak{t}(y) \prec g_\mathfrak{t}(x)$ follows from Lemmata~\ref{lem:lst1} and~\ref{lem:lst2} and from the fact that $|g_\mathfrak{t}(y)| > |g_\mathfrak{t}(x)|$.

Suppose now that $y$ is a left child of $x$.
Let $\lss{\mathfrak{t}}{x} = (\mathfrak{t}_1, \ldots, \mathfrak{t}_n)$, with $\mathfrak{t}_n = (\mathfrak{t}_n', \mathfrak{t}_n'')$.
Then $\lss{\mathfrak{t}}{y} = ( \mathfrak{t}_1, \ldots, \mathfrak{t}_{n-1}, \mathfrak{t}_n')$.
Let $h_i$ be the foliage of $\mathfrak{t}_i$ for each $1 \le i \le n$, and $h_n'$ be the foliage of $\mathfrak{t}_n'$.
Thus $g_\mathfrak{t}(x) = h_1 \cdots h_n$ and $g_\mathfrak{t}(y) = h_1 \cdots h_{n-1} h'_n$.
Thus $g_\mathfrak{t}(x) \prec g_\mathfrak{t}(y)$ by the Lemmata~\ref{lem:lst1} and~\ref{lem:lst2}.
	
It remains to show that the projection is exactly $(p_1,\ldots,p_n)$.
This follows from the fact the length of $g_\mathfrak{t}(x)$ is equal to the number of leaves located at the left of a given node $x$; hence $g_\mathfrak{t}(x)$ is the prefix of $w$ of length this number.
We conclude because the lengths of the successive projections of the internal nodes increase by 1 from left to right.
	\end{proof}

\begin{example}
Let $w = aabaacab$.
The tree $\lst{w}$ with each internal node $x$ labeled by $g_{\lst{w}}(x)$ (as described in the proof of Theorem~\ref{theo:equal}) is shown in Figure~\ref{fig:variant}.
\end{example}

\bibliography{biblio}

\begin{thebibliography}{10}

\bibitem{AC}
Alberto Apostolico and Maxime Crochemore.
\newblock Fast parallel {L}yndon factorization with applications.
\newblock {\em Math. Systems Theory}, 28(2):89--108, 1995.

\bibitem{BIINTT}
Hideo Bannai, Tomohiro I, Shunsuke Inenaga, Yuto Nakashima, Masayuki Takeda,
  and Kazuya Tsuruta.
\newblock A new characterization of maximal repetitions by {L}yndon trees.
\newblock In {\em Proceedings of the {T}wenty-{S}ixth {A}nnual {ACM}-{SIAM}
  {S}ymposium on {D}iscrete {A}lgorithms}, pages 562--571. SIAM, Philadelphia,
  PA, 2015.

\bibitem{runs}
Hideo Bannai, Tomohiro I, Shunsuke Inenaga, Yuto Nakashima, Masayuki Takeda,
  and Kazuya Tsuruta.
\newblock The ``runs'' theorem.
\newblock {\em SIAM J. Comput.}, 46(5):1501--1514, 2017.

\bibitem{B}
George~M. Bergman.
\newblock Centralizers in free associative algebras.
\newblock {\em Trans. Amer. Math. Soc.}, 137:327--344, 1969.

\bibitem{BLRS}
Jean Berstel, Aaron Lauve, Christophe Reutenauer, and Franco~V. Saliola.
\newblock {\em Combinatorics on words}, volume~27 of {\em CRM Monograph
  Series}.
\newblock American Mathematical Society, Providence, RI, 2009.

\bibitem{BMRRS}
Silvia Bonomo, Sabrina Mantaci, Antonio Restivo, Giovanna Rosone, and Marinella
  Sciortino.
\newblock Sorting conjugates and suffixes of words in a multiset.
\newblock {\em Internat. J. Found. Comput. Sci.}, 25(8):1161--1175, 2014.

\bibitem{CK}
Christian Choffrut and Juhani Karhum\"{a}ki.
\newblock Combinatorics of words.
\newblock In {\em Handbook of formal languages, {V}ol. 1}, pages 329--438.
  Springer, Berlin, 1997.

\bibitem{CIKRRW}
Maxime Crochemore, Costas~S. Iliopoulos, Marcin Kubica, Jakub Radoszewski,
  Wojciech Rytter, and Tomasz Wale\'{n}.
\newblock The maximal number of cubic runs in a word.
\newblock {\em J. Comput. System Sci.}, 78(6):1828--1836, 2012.

\bibitem{CR}
Maxime Crochemore and Luís~M.S. Russo.
\newblock Cartesian and lyndon trees.
\newblock {\em Theoretical Computer Science}, 2018.
\newblock \href {http://dx.doi.org/https://doi.org/10.1016/j.tcs.2018.08.011}
  {\path{doi:https://doi.org/10.1016/j.tcs.2018.08.011}}.

\bibitem{DRR18}
Francesco Dolce, Antonio Restivo, and Christophe Reutenauer.
\newblock On generalized {L}yndon words.
\newblock {\em TCS}, 2018.
\newblock \href {http://dx.doi.org/https://doi.org/10.1016/j.tcs.2018.12.015}
  {\path{doi:https://doi.org/10.1016/j.tcs.2018.12.015}}.

\bibitem{D}
Jean-Pierre Duval.
\newblock Factorizing words over an ordered alphabet.
\newblock {\em J. Algorithms}, 4(4):363--381, 1983.

\bibitem{GRR}
Ira~M. Gessel, Antonio Restivo, and Christophe Reutenauer.
\newblock A bijection between words and multisets of necklaces.
\newblock {\em European J. Combin.}, 33(7):1537--1546, 2012.

\bibitem{GR}
Ira~M. Gessel and Christophe Reutenauer.
\newblock Counting permutations with given cycle structure and descent set.
\newblock {\em J. Combin. Theory Ser. A}, 64(2):189--215, 1993.

\bibitem{GMRRS}
Raffaele Giancarlo, Giovanni Manzini, Antonio Restivo, Giovanna Rosone, and
  Marinella Sciortino.
\newblock Block sorting-based transformations on words: beyond the magic {BWT}.
\newblock In {\em Developments in language theory}, volume 11088 of {\em
  Lecture Notes in Comput. Sci.}, pages 1--17. Springer, Cham, 2018.

\bibitem{HR}
Christophe Hohlweg and Christophe Reutenauer.
\newblock {L}yndon words, permutations and trees.
\newblock {\em Theoret. Comput. Sci.}, 307(1):173--178, 2003.

\bibitem{K}
Manfred Kufleinter.
\newblock On bijective variants of the {B}urrows-{W}heeler transform.
\newblock In {\em Proceedings of the {P}rage {S}tringology {C}onference 2009},
  pages 65--79, 2009.

\bibitem{L}
M.~Lothaire.
\newblock {\em Combinatorics on words}.
\newblock Cambridge Mathematical Library. Cambridge University Press,
  Cambridge, 1997.

\bibitem{L2}
M.~Lothaire.
\newblock {\em Algebraic combinatorics on words}, volume~90 of {\em
  Encyclopedia of Mathematics and its Applications}.
\newblock Cambridge University Press, Cambridge, 2002.

\bibitem{Lyndon}
Roger~C. Lyndon.
\newblock On {B}urnside's problem.
\newblock {\em Trans. Amer. Math. Soc.}, 77:202--215, 1954.

\bibitem{MRRS}
Sabrina Mantaci, Antonio Restivo, Giovanna Rosone, and Marinella Sciortino.
\newblock An extension of the {B}urrows-{W}heeler transform.
\newblock {\em Theoret. Comput. Sci.}, 387(3):298--312, 2007.

\bibitem{PR}
Dominique Perrin and Antonio Restivo.
\newblock Words.
\newblock In {\em Handbook of enumerative combinatorics}, Discrete Math. Appl.
  (Boca Raton), pages 485--539. CRC Press, Boca Raton, FL, 2015.

\bibitem{R2}
Christophe Reutenauer.
\newblock Mots de {L}yndon g\'en\'eralis\'es.
\newblock {\em S\'em. Lothar. Combin.}, 54:Art. B54h, 16, 2005/07.

\bibitem{U}
Victor~A. Ufnarovskij.
\newblock Combinatorial and asymptotic methods in algebra.
\newblock In {\em Algebra, {VI}}, volume~57 of {\em Encyclopaedia Math. Sci.},
  pages 1--196. Springer, Berlin, 1995.

\bibitem{V}
G\'erard Viennot.
\newblock {\em Alg\`ebres de {L}ie libres et mono\"\i des libres}, volume 691
  of {\em Lecture Notes in Mathematics}.
\newblock Springer, Berlin, 1978.

\end{thebibliography}

\end{document}